\documentclass[11pt]{article}
\usepackage{amsmath,epsfig}
\usepackage{fullpage}

\usepackage{graphicx}

\usepackage{array}

\usepackage{amsmath,amsthm,amssymb}
\usepackage[ruled,vlined,linesnumbered]{algorithm2e}
\usepackage{qtree}
\usepackage{subfigure}
\usepackage{url}

\usepackage{multirow}
\usepackage{hhline}
\usepackage{booktabs}

\usepackage[usenames,dvipsnames]{pstricks}
\usepackage{epsfig}
\usepackage{pst-grad} 
\usepackage{pst-plot} 

\newtheorem{theorem}{Theorem}[section]
\newtheorem{corollary}{Corollary}[section]
\newtheorem{lemma}{Lemma}[section]
\newtheorem{problem}{Problem}
\newtheorem{definition}{Definition}[section]
\newtheorem{observation}{Observation}[section]

\newtheorem{fact}{Fact}[section]

\newcommand{\qedsymb}{\hfill{\rule{2mm}{2mm}}}

\DeclareMathOperator{\nexthit}{NextHit}
\DeclareMathOperator{\ds}{DirectSample}

\def\0{\phantom{0}}

\newcommand{\st}{:}

\newcommand{\remove}[1]{}

\newbox\tallstrutbox
\setbox\tallstrutbox=\hbox{\vrule height10pt depth 3.5pt width0pt}
\def\tallstrut{\relax\ifmmode\copy\tallstutbox\else\unhcopy\tallstrutbox\fi}

\newbox\tallerstrutbox
\setbox\tallerstrutbox=\hbox{\vrule height14pt depth 3.5pt width0pt}
\def\tallerstrut{\relax\ifmmode\copy\tallerstutbox\else\unhcopy\tallerstrutbox\fi}

\DeclareMathOperator{\e}{E}
\DeclareMathOperator{\var}{Var}
\DeclareMathOperator{\samplesize}{\lceil 60/\epsilon^2\rceil}
\DeclareMathOperator{\maxlevel}{\lfloor \log p \rfloor}

\title{Boosting the Basic Counting on Distributed Streams}

\date{}

\author{%
  {Bojian Xu
  }
  \vspace{1.6mm}\\
  Department of Computer Science\\ Eastern Washington University,
  Cheney, WA 99004, U.S.A.\\
  bojianxu@ewu.edu }
\begin{document}
\maketitle

\begin{abstract} 
  We revisit the classic basic counting problem in the distributed
  streaming model that was studied by Gibbons and Tirthapura
  (GT). 
%
%
  In the solution for maintaining an $(\epsilon,\delta)$-estimate, as
  what GT's method does, we make the following new contributions:
  (1) For a bit stream of size $n$, where each bit has a probability
  at least $\gamma$ to be 1, we exponentially reduced the
  average total processing time from GT's $\Theta(n \log(1/\delta))$ to
  $O((1/(\gamma\epsilon^2))(\log^2 n) \log(1/\delta))$, thus providing
  the first sublinear-time streaming algorithm for this problem.
  (2) In addition to an overall much faster processing speed, our
  method provides a new tradeoff that a lower accuracy
  demand (a larger value for $\epsilon$) promises a faster processing
  speed, whereas GT's processing speed is $\Theta(n \log(1/\delta))$
  in any case and for any $\epsilon$.
  (3) The worst-case total time cost of our method matches GT's
  $\Theta(n\log(1/\delta))$, which is necessary but rarely occurs
  in our method.
  (4) The space usage overhead in our method is a lower order term
  compared with GT's space usage and  occurs only $O(\log n)$ times
  during the 
  stream processing and is too negligible to be detected by the
  operating system in
  practice. 
  We further validate these solid theoretical results with
  experiments on both real-world and synthetic data, showing that our
  method is faster than GT's by a factor of several to
  several thousands depending on the stream size and accuracy demands,
  without any detectable space usage overhead.
  Our method is based on a faster sampling technique that we design
  for boosting GT's method and we believe this technique can be of
  other 
  interest.
\end{abstract}

%
\section{Introduction}






Advances in modern science and technology have given rise to massive
data (or so-called big data). Some of the data naturally arrives as
{\em streams}. Examples include  network data packets passing through
a router, environmental data collected by sensor networks, and search
requests received by search engines. In many cases, such massive
streaming data needs to be monitored in a real-time fashion. Such data
process requirements make conventional methods such as storing them in
a relational database and issuing SQL queries thereafter infeasible,
and thus brings up the phenomenon of
 \emph{data stream processing}~\cite{muthu-book,BBDMW02}. In data stream
processing, the workspace is often orders of magnitude smaller than
the stream size, requiring the data be processed in one pass.

However, most streaming algorithms need to look at every data element
at least
once~\cite{AMS99,GKMS01,AGMS99,MG82,CGMR05,CMZ06,CMY08,SBAS04,DGIM02-SICOMP,BJKST02}
(see~\cite{muthu-book,BBDMW02} for many other example references).  In some
cases where extremely fast paced streaming data is involved, 
even a single glance at every stream element can be unaffordable. For
example, a typical OC48 link transfers 2.5 Gbits per second and AT\&T
backbone networks carry over 15 petabytes of data traffic on an
average business day. Deploying a streaming algorithm 
for monitoring purpose to  process every data element
in such massive data streams is very computationally expensive and can greatly hurt the 
performance of the system. 
The goal of sublinear-time algorithms is to solve computational
problems without having to look at every input data element. However,
in sublinear time algorithm design, the input data is often stored
statically~\cite{Fischer01theart,Goldreich98,DanaRon08-survey,RS-SIDMA11},
meaning we can visit any part of the input data at any time if needed.

In this paper, we demonstrate that designing a sublinear-time
algorithm for streaming data is also possible, without losing accuracy
guarantee compared with linear-time streaming algorithms. In
particular, we proposed the first streaming algorithm for the
distributed basic counting problem using time sublinear of the stream
size in the average case.  To our best knowledge, the best prior
result~\cite{GT01} for solving this problem has to visit every stream
element at least once and thus needs a time cost at least linear of
the stream size in any case.

\paragraph{Distributed basic counting.} 
Alice and Bob (called processors) are processing two geographically
distributed bit streams $A=\{a_1, \ldots, a_n\}$ and $B=\{b_1, \ldots,
b_n\}$, respectively and in parallel. In each stream, the $i$th bit is
received and processed before the $j$th bit, if $i<j$.  Upon receiving
a query, the referee, who is located on another remote site, wants to
know the number of 1-bits in the bit-wise {\tt OR} of the two bit
streams that Alice and Bob have observed and processed. 
$$
U(A,B)=\sum_{i=1}^{n}(a_i \lor b_i)
$$
where $\lor$ is the bit-wise logical {\tt OR} operator and $n$ is the
number of bits that Alice and Bob have both received when the query
arrives at the referee.  Note that both streams evolve 
over time and thus the stream size $n$ and the value of $U$ monotonically
increase over time.
The constraints and challenges in the computation of $U$ are: (1) no
direct communication between processors is allowed since there are no
direct connection between Alice and Bob, (2) use small workspace on the
processors as well as on the referee, and (3) use small communication
cost on the links connecting the processors and the referee.

The problem can be generalized to $k$ streams $R_1, R_2, \ldots, R_k$,
processed by $k$ processors respectively, for some constant $k\geq 2$.
For $j=1,2,\ldots, k$, we write the stream $R_j$ in the form of
$\{r_{j,1}, r_{j,2}, \ldots, r_{j,n}\}$.  Upon receiving a query,
the referee wants to know the number of 1-bits in the bit-wise {\tt
  OR} of the $k$ streams:
\begin{equation}
\label{eqn:uk}
U(R_1,R_2,\ldots, R_k)=\sum_{i=1}^{n}(r_{1,i} \lor r_{2,i} \lor \ldots
\lor r_{k,i})
\end{equation}
The same constraints and challenges for the 2-stream case
hold in this general setting. Figure~\ref{fig:dc} shows
the system setting that is assumed in the distributed basic
counting. Because our method for the 2-stream case can be
easily extended for the general setting, we will focus 
on the 2-stream case in our presentation. The extension for 
the general setting will be presented in the end. 

We refer readers to~\cite{GT01} for a detailed discussion on the
extensive applications of the distributed basic counting
in large-scale data aggregation and monitoring.

\begin{figure}
  \centering
  \includegraphics[scale=0.4]{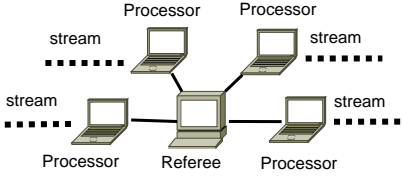}
  \caption{The setting in distributed basic counting}
  \label{fig:dc}
\end{figure}

\paragraph{Prior work.} 
A naive method for the referee to maintain the knowledge of $U$ is to
get Alice and Bob to continuously forward their stream elements to the
referee. The referee will then simply do a straightforward calculation
of $U$ in one pass of the two streams, using $O(\log n)$ bits of
workspace at both the processors and the referee. However, this
approach introduces a high communication cost between the processors
and the referee, which is prohibited in many applications such as
network monitoring. To reduce the communication cost, Gibbons and
Tirthapura proposed a communication-efficient distributed computing
scheme~\cite{GT01}, where Alice and Bob each maintains a small-space
data structure (a.k.a.\ sketch) without communicating neither to each
other nor to the referee over the course of stream processing.  When
the query arrives, the referee will first notify Alice and Bob to send
their sketches to the referee. The referee will then retrieve the
knowledge of $U$ from the sketches. However, under this distributed
computing setting, $\Omega(n)$ bits of communication cost is necessary
to get the exact value of $U$ even for randomized
algorithms~\cite{KN97}.  It is also shown that $\Omega(\sqrt{n})$ bits
of workspace is necessary at each processor even for an approximate
answer for $U$ with a relative error bound, if Alice and Bob sample
their streams \emph{independently}~\cite{GT01}. In order to achieve a
solution of both workspace and communication efficiency, Gibbons and
Tirthapura proposed the \emph{coordinated adaptive sampling} (a.k.a.\
distinct sampling) technique that uses only
$O((1/\epsilon^2)\log(1/\delta)\log n)$ bits of workspace at each
processor and the referee and $O((1/\epsilon^2)\log(1/\delta)\log n)$
bits of communication cost per link and per query. By using these
sublinear (of stream size) space and communication cost, their
technique guarantees an \emph{$(\epsilon,\delta)$-estimate} for
$U$~\cite{GT01}.  Their algorithm can be trivially extended to provide
an $(\epsilon,\delta)$-estimate of $U$ over multiple streams with the
same aforementioned workspace and communication cost.

\begin{definition}
  Given the parameters $\epsilon$ and $\delta$, $0 <
  \epsilon, \delta < 1$, the $(\epsilon,\delta)$-estimate of
  a nonnegative variable $X$ is a random variable $\hat{X}$, such that:
$$
\Pr\left[|\hat{X}-X| \leq \epsilon X\right] \geq 1-\delta
$$
\end{definition}
In particular, the $(\epsilon,0)$-estimate is also called
\emph{$\epsilon$-estimate}.

\begin{definition}[$\gamma$-random bit stream.] 
\label{def:randstream}
A bit stream $\{a_1,
  a_2, \ldots, a_n\}$ is a \emph{$\gamma$-random bit stream}, if all
  the bits in the stream are mutually independent and $\Pr[a_i =
  1] \geq \gamma$, for $i=1,2,\ldots, n$, where $0<\gamma<1$ is a
  constant.
\end{definition}

The notion of $\gamma$-random bit stream can (roughly) capture the distribution of
many real-world bit streams.

\subsection{Our contribution}
We designed a novel sampling technique that enables us to sample the
stream without having to check every stream element.  By using this
fast sampling technique, we are able to boost GT's processing speed in
maintaining an $(\epsilon,\delta)$-estimate of the distributed basic
counting with negligible extra space usage. Table~\ref{tab:compare}
summarizes the performance comparison of our method and GT's.

\begin{itemize}
\item The average total processing time for a $\gamma$-random bit stream
is reduced from GT's $\Theta\bigl(n \log\frac 1 \delta\bigr)$ to
$O\bigl(\frac{1}{\gamma\epsilon^2}\log^2 n \log\frac 1 \delta\bigr)$.  Our method not
only \emph{exponentially} improves the overall processing speed,
providing the first sublinear-time algorithm in the average case, but
also provides a new tradeoff that a lower accuracy demand
(a larger value for $\epsilon$) promises a faster processing speed,
whereas GT's method spends $\Theta\bigl(n \log\frac 1 \delta\bigr)$ time regardless
of the accuracy demand.

\item Our method's worst-case total processing time matches GT's
$\Theta(n\log(1/\delta))$, which is necessary. However, this
worst-case time cost rarely occurs with our method, whereas GT's
method always needs $\Theta(n\log(1/\delta))$ time in any case.

\item  Each processor uses $O((1/\epsilon^2 + \log
n)\log(1/\delta) \log n)$ bits of workspace in our method. 
Compared with GT's space usage of
$O((1/\epsilon^2)\log(1/\delta) \log n)$
bits, our method's extra space usage 
is a lower-order term for any
real-world bit stream and a reasonably small $\epsilon$ (say
$\epsilon \leq 0.1$) and is indeed undetectable in our experiments with
both real-world and synthetic data. Further, this extra space cost
occurs only $O(\log n)$ times in average during the stream processing.

\item The workspace at the referee and the communication cost between the
processors and the referee in our method remains the same as GT's. 

\item  We conducted a comprehensive experimental study using both
real-world and synthetic data. All experimental results show that our
method is faster than GT's by a factor of several to several
thousands depending on the stream size and accuracy demand. 
Our method can potentially save the vast majority of the
processing time and energy that is consumed by GT's method in the
real-world deployment, where the stream size can be
nearly unbounded. All experiments also show that the OS
does not detect any extra space cost used by our method compared with
GT's.

\item The fast sampling technique we proposed can also be of other
independent interest in the design of sampling-based algorithms 
solving other problems.
\end{itemize}

\begin{table}
\label{tab:compare}
  \centering

{\footnotesize
  \begin{tabular}{m{4.3cm}|l|m{4cm}|m{4cm}}
\hline
\tallstrut
& \textrm{GT's~\cite{GT01}} & \textrm{Ours} & \textrm{Note}\\
\hline
\hline
\tallstrut
\textrm{Worst-case total time cost} & $\Theta\bigl(n\log
\frac{1}{\delta}\bigr)$ & $\Theta\bigl(n\log \frac{1}{\delta}\bigr)$ &
always occurs with GT's, but rarely occurs with ours.\\
\tallstrut
\textrm{Worst-case per-item time cost} & $\Theta\bigl(\log n \log \frac 1 \delta\bigr)$ & $\Theta\bigl(\log n \log \frac 1 \delta\bigr)$ &  \\ 
\tallstrut
\textrm{Avg.-case total time cost for a $\gamma$-random bit stream} &  $\Theta\bigl(n\log \frac{1}{\delta}\bigr)$ & $O\bigl(\frac{1}{\gamma\epsilon^2}\log^2 n \log \frac 1 \delta\bigr)$ & \textrm{significantly improved} \\
\tallstrut
\textrm{Avg.-case per-item time cost for a $\gamma$-random bit stream} & $\Theta\bigl(\log \frac{1}{\delta}\bigr)$ & $O\bigl(\frac{1}{n\gamma\epsilon^2}\log^2 n \log \frac 1 \delta\bigr) =o(1)$, when $n$ is large. & \textrm{significantly improved} \\ 
\hline
\textrm{\tallstrut Space cost per-processor (bits)} & $O\bigl(\frac{1}{\epsilon^2}\log \frac 1\delta\log n\bigr)$ & $O\bigl((\frac{1}{\epsilon^2} + \log n) \log \frac 1\delta \log n\bigr)$ & \textrm{negligible overhead} \\
\tallstrut
\textrm{Space cost by the referee (bits)} & $O\bigl(\frac{1}{\epsilon^2}\log \frac 1\delta\log n\bigr)$ & $O\bigl(\frac{1}{\epsilon^2}\log \frac 1\delta\log n\bigr)$ & \\
\hline
\textrm{\tallstrut Comm.\ cost per query (bits)}& $O\bigl(\frac{1}{\epsilon^2}\log \frac 1\delta\log n\bigr)$ & $O\bigl(\frac{1}{\epsilon^2}\log \frac 1\delta\log n\bigr)$ & \\
\hline
  \end{tabular}
}
\caption{The performance comparison between GT's
  method and ours.}
\end{table}

\subsection{Paper organization} 
After a survey of related work, we will present a high-level overview
of our method in Section~\ref{sec:overview}, where we will introduce
the structure of the coordinated adaptive sampling and show the
opportunity for boosting.  We will then introduce the new sampling
technique for boosting in Section~\ref{sec:directsample}. By plugging
the faster sampling technique into the coordinated adaptive sampling,
we are able to present the complete picture of our method in
Section~\ref{sec:new}. The details of a comprehensive experimental
study using both real-world and synthetic data are given in
Section~\ref{sec:exp}.  This paper is concluded by
Section~\ref{sec:con}.

\section{Related work}
In this section, we summarize the results on basic counting in the
streaming model under various constraints and settings. For a broader
overview of stream processing, we refer readers to the
surveys~\cite{muthu-book,BBDMW02}. 

While the small-space basic counting of a whole single stream is
trivial, it becomes much harder when counting the number of 1-bits
on the union of multiple
geographically distributed streams. It is shown in~\cite{KN97} that
$\Omega(n)$ bits of workspace is necessary for an exact answer even for
randomized algorithms, where $n$ is the size of each single stream.
Datar et al.~\cite{DGIM02-SICOMP} considered the basic counting on a single
stream, given the constraint of sliding windows. It is easy to show
$\Omega(N)$ bits is required to maintain the exact knowledge of how many
1-bits in a sliding window of size $N$, representing the most recently received 
 $N$ bits.
They further showed a space lower bound of $\Omega((1/\epsilon)\log^2
(\epsilon N))$ bits for both deterministic and randomized algorithms
for maintaining an $\epsilon$-estimate of the basic counting over a
sliding window of size $N$. They also proposed an exponential
histogram based deterministic algorithm that guarantees an
$\epsilon$-estimate using workspace matching the above space lower
bound, but their algorithm cannot work for the union of multiple bit
streams under the sliding windows model.

Gibbons and Tirthapura~\cite{GT02} solved this new challenge by
extending and applying their technique from~\cite{GT01} to the sliding
window setting over multiple streams. Instead of maintaining one
sample as was done in~\cite{GT01} for the case without sliding window,
they maintain $\log N$~\footnote{In this paper, we use the convention
  that, unless specified explicitly, the base of logarithm function is
  $2$.} samples where every sample has a different sample probability
from $1,1/2,\ldots$, $1/(2^{\log N})$, because the number of 1-bits in
the sliding window can vary over time depending upon the input stream
and thus one fixed good sample probability cannot be predetermined
beforehand. By maintaining multiple samples, their
technique is able to pick the best samples with the right sample
probabilities from the multiple processors at the query time, so that the referee
is guaranteed to have a good estimate for the basic counting on the union of the
bit streams over the sliding window.

Xu et al.~\cite{XTB08} considered the sliding window based basic
counting on an asynchronous stream, where the arrival order of the stream
elements is not necessarily the same as the order at which they
were created. Such an asynchronous stream model is motivated by
the real-world scenarios such as  network data packets being
received out of order at the destination due to the network delay
and multi-path routing. The core idea of their solution is mostly
identical to the one from~\cite{GT02} but is modified for 
asynchronous streams. It is also easy to extend their solution
so that it can work for the union of multiple streams.  
Busch and Tirthapura~\cite{BT07} later also solved the asynchronous
stream basic counting problem over sliding windows for one stream.
Their solution is deterministic and is based on a novel data structure
called splittable histogram, but it is not clear how to extend their
method to multiple streams.

The coordinated random sampling technique~\cite{GT01} can
also be used for counting $F_0$~\cite{AMS99,BJKST02,IW03}, the number
of distinct elements, over one or the union of multiple data streams,
with or without sliding windows~\cite{GT01,GT02,Gi01vldb}.
Pavan and Tirthapura~\cite{PT07} generalized the data stream model in
the calculation for $F_0$. In their stream model, every stream element
is no longer a single integer but is a range of continuous
integers. The $F_0$ is defined as the number of distinct integers in
the union of the ranges that have been received in the stream. A
trivial solution is to expand each range into a sequence of integers
and use an existing $F_0$ algorithm to process each integer.  The
time cost for processing a range will then be at least linear of the
range size, which is not acceptable when the range size is large. They
proposed a divide and conquer based strategy, such that the time cost
for processing each range is only a logarithm of the range size. Part
of the idea behind our new sampling technique presented in this paper is
inspired by their work.

All these related work need to observe each stream element at
least once, leading to their time costs to be at least linear of the
stream size.

\section{A high-level overview}
\label{sec:overview}


The high-level structure of our method is the coordinated adaptive
sampling by Gibbon and Tirthapura~\cite{GT01}, but uses a different
hash function for the random sampling. We exploit 
the properties of the hash function, so that we can do random
sampling over the data stream without having to check every stream
element.

\subsection{Coordinated adaptive random sampling}
Alice and Bob use the same sampling procedure and the sample size, so
we will only describe the behavior of Alice.

{\bf Random sampling.}
Alice maintains a sample of some known size $\alpha$, which will be
determined later. She randomly selects each 1-bit (by storing the 
bit's stream location index) into the sample with some
probability $p$. After processing the stream, the number of 1-bits
selected into the sample multiplied by $1/p$ can be a good estimate of
the number of 1-bits in the stream, if $\alpha$ is large enough.

{\bf Adaptive random sampling.} However, Alice does not know in
advance how many 1-bits will be present in her stream and thus cannot
decide an appropriate sample probability $p$. If the sample
probability is too high, the sample size may not be big enough to accommodate
all the selected 1-bits; if the sample probability is too small,
the sample may not select enough 1-bits to yield a good estimate.
To overcome this difficulty, Alice
\emph{adaptively} changes the sample probability over the course of
her stream processing.
The sample probability is determined by the \emph{sample
  level} $\ell$, which starts from $0$ and increases towards  $1, 2, \ldots$. At sample level
$\ell$, every incoming 1-bit is selected into the sample with sample
probability $P_\ell =1/2^\ell$ (In our method, $P_\ell$ is not
exactly but is nearly equal to $1/2^\ell$ and will be clear later).
Alice is always aware of her current sample probability by
remembering her current sample level $\ell$.

{\bf Coordinated adaptive random sampling.}  However, Gibbons and
Tirthapura showed that if all the processors do their random sampling
\emph{independently}, $\Omega(\sqrt{n})$ bits of workspace at each
processor is necessary even for an estimate of the distributed basic
counting with a relative error bound~\cite{GT01}.  The way
in~\cite{GT01} to overcome this space lower bound is to
\emph{coordinate} the random sampling procedures by using a common
hash function to simulate the sampling procedure.  The hash function
used in~\cite{GT01} is pair-wise independent and is defined over the
field $GF(2^m)$, where $m=\log n$. Their hash function maps the stream
location indexes $\{1,2,\ldots,n\}$ to the sample levels $\{0, 1,
\ldots, m\}$. The hash function provides that the probability that a
stream location index is hashed to a particular sample level $i$ is
exactly equal to $1/2^{i+1}$.  An incoming 1-bit will be selected into
the sample if and only if its stream location index is hashed to a
sample level $i\geq \ell$, where $\ell$ is the current sample
level. Thus, in average, only $1/2^{\ell+1}+1/2^{\ell+2}+\ldots +
1/2^m \approx 1/2^\ell$ of the $n$ stream element locations will be
selected, which is the goal of the adaptive sampling --- each stream
location should be selected with probability $P_\ell=1/2^\ell$.
Clearly, coordinated adaptive random sampling yields the same set of
selected stream locations at all processors if they are on the same
sample level.  Of course, those selected locations that do not have
1-bits will not be saved in the sample.

{\bf When the sample is full.}  
The sample level will be incremented by one, when the sample becomes
full. All the 1-bits that are currently in sample and whose hash
values are less than the new sample level $\ell$ will be discarded
from the sample. By doing so, all the 1-bits that Alice has received
so far have been assigned the same sample probability $P_\ell$ for
selection. Alice will then continue her processing of the new incoming
1-bits using the current sample probability $P_\ell$ until the sample
becomes full again.

{\bf When the referee receives a query for $U$.}  All processors will
send their samples and sample levels to the referee. For each sample,
if its sample level is smaller than $\ell_{max}$, the largest sample
level the referee has received, the referee will increase its sample
level to $\ell_{max}$ by retaining only those 1-bits whose hash values
are not less than $\ell_{max}$. After this resampling step,
all the samples will share the same sample level. The referee will
then union all the samples by conducting bit-wise {\tt OR} of the
selected 1-bits according to their stream locations. The number of
1-bits in union multiplied by $1/P_{\ell_{max}}$ will be returned by the referee as the
answer to the query for $U$.

\subsection{Use a different hash function for sampling}
\label{subsec:hash}
Our method follows GT's structure but 
uses a different hash function to conduct the coordinated adaptive
sampling procedure. Since the choice of the hash function is critical for the
improvement of the processing speed, we provide its details in the
following. We first pick a prime number $p$ uniformly at random from
$[10n,20n]$, then pick another two numbers $a$ and $b$ from
$\{0,1,\ldots, p-1\}$ and $a\neq 0$. The hash function $h:\{1,2,\ldots,
n\}\rightarrow \{0,1,\ldots, p-1\}$ is defined as: $h(x)=(a\cdot x +
b) \mod p $. It is well known~\cite{CW79} that: 
\begin{itemize}
\item 
$h(x)$ is
uniformly distributed in $\{0,1,\ldots, p-1\}$. 
\item 
$h$ is
pair-wise independent, i.e., for any $x_1\neq x_2$ and $y_1,y_2$:
$\Pr[(h(x_1)=y_1) \land (h(x_2)=y_2)] = \Pr[h(x_1)=y_1]\cdot
\Pr[h(x_2)=y_2]$.
\end{itemize}

For each
$\ell \in \{0,1,\ldots, \maxlevel\}$,  we define:
$$
R_\ell = \left\{0,1,\ldots,\lfloor p/2^\ell \rfloor -1\right\}
$$
Our sampling policy is that a stream element $a_i$ will be selected
into the sample if $a_i=1$ and $h(i)\in R_\ell$. Since the hash values
$\{0,1,\ldots, p-1\}$ are
uniformly distributed, the probability of
selecting any 1-bit at sample level $\ell$ is:
$$
P_\ell = |R_\ell|/p=\lfloor p/2^\ell \rfloor/p \approx 1/2^\ell
$$

Algo.~\ref{algo:gt} shows the pseudocode of GT's coordinated
adaptive sampling, plugged in our new hash function, for estimating
$U$ of two streams.

\begin{algorithm}[t]
\small
  \caption{GT's coordinated adaptive sampling for distributed basic
    counting using the pairwise independent hash function $h(x)=(ax+b)\mod p$.}
\label{algo:gt}

\KwIn{Two geographically distributed bit streams $A=\{a_1, a_2, \ldots,
  a_n\}$ and $B=\{b_1, b_2, \ldots, b_n\}$, processed in parallel
  by Alice and Bob, respectively.} 

\KwOut{$U(A,B) = \sum_{i=1}^n (a_i \lor b_i)$, returned by the referee.}

\smallskip

{\bf Randomly pick a pairwise independent hash function:} $h(x) = (ax+b) \mod p$,
where $p$ is a prime number randomly picked from $[10n,20n]$, and
$a,b$ are randomly picked from $\{0,1,\ldots, p-1\}$ and $a\neq 0$.

\smallskip

\underline{\bf Alice}: 
\smallskip 

Initialize an empty sample $S$ of size $\alpha$;
$\ell\leftarrow 0$\;

\For{$i=1, 2, \ldots$ n\label{line:alicefor}}{
\If{$a_i=1$ and $h(i) \in R_\ell$ \label{line:alice}}
{
 $S \leftarrow S \cup \{i\}$\;
\While (\tcp*[f]{Sample is full.}){$|S| > \alpha$}{
  $\ell = \ell + 1$\;
  \lIf{$\ell > \maxlevel$}{Exit} \label{line:fail} 
  \tcp*{Algorithm fails.}
  Discard every $x\in S$ such that $h(x) \notin R_\ell$\;
}
}
}

\smallskip

\underline{\bf Bob}: 
(Symmetric to Alice)

\smallskip

\underline{\bf Referee}: \tcp*[f]{Upon receiving a query for $U$.}

\smallskip 

Receive $S_A,\ell_A$ from Alice and $S_B,\ell_B$ from Bob;

$\ell^*\leftarrow\max(\ell_A, \ell_B)$;
\label{line:l}

\lIf{$\ell_A < \ell^*$}{
    Discard every $x\in S_A$ such that 
        $h(x) \notin R_{\ell^*}$\;
}
\lElseIf{$\ell_B < \ell^*$}{
     Discard every $x\in S_B$ such that 
        $h(x) \notin R_{\ell^*}$\;
}
\Return{$|S_A \cup S_B|/ P_{\ell^*}$}; 
\end{algorithm}

\subsection{The opportunity for speedup}
GT's method checks every stream element at least once
(Line~\ref{line:alicefor}--\ref{line:alice}, Algo.~\ref{algo:gt}),
which yields their time cost for processing a stream of $n$ bits is at
least $\Theta(n)$.
We observe that the hash values $h(1), h(2), \ldots$ of the hash
function $h$ we use (Section~\ref{subsec:hash}) follows some pattern,
which will be made clear later. By taking advantage of this pattern,
it is possible to skip over some stream elements without checking
their hash values, because their hash values can be proved to be out
of $R_\ell$. In particular, suppose $a_i$ is the element that we are
currently processing and $a_{i+d}$ is the next element whose hash
value is within $R_\ell$, we are able to design an algorithm that
finds the value of $d$ using $O(\log d)$ time. That says we will not
need to literally check the elements $a_{i+1}, a_{i+2}, \ldots,
a_{a+d-1}$ one by one.  Our algorithm for finding the value of $d$ is
based on a recursive decomposition of the task, so that the new
instance of the problem has a significantly smaller problem size. The
recursive decomposition is based some properties of the hash function
$h(x)=(a\cdot x+b)\mod p$ that we use for coordinated adaptive
sampling.

\section{The new technique: direct sampling}
\label{sec:directsample}


The challenge for fast sampling is to find the next stream location
that will be selected into the sample if that location has a 1-bit,
without having to check the hash value of every stream location up to
that sampled location. That is, suppose we are now at sample level
$\ell$ and at the stream location $x$, we want to compute the
following function quickly:
$$
\ds(x,\ell, p, a, b)=x+\mathcal{N}_x^\ell
$$
where
$$
\mathcal{N}_x^\ell = \min\{i \geq 0 \mid h(x+i) = (a(x+i)+b)\mod p 
\in R_\ell\}
$$
Note that for an arbitrary setting of the parameters of $h$,
$\mathcal{N}_x^\ell$ may not be well defined because the set $\{i \geq
0 \mid h(x+i) \in R_\ell\}$ can be empty.  For
example, if $p=8, a=4, b=2, \ell=2, x=1$, then for any $i\geq 0$, the
value of $h(x+i)$ is always either $2$ or $6$, neither of which 
belongs to $R_\ell = \{0,1\}$.
However, it can be shown that when $p$ is a prime number, 
$\mathcal{N}_x^\ell$ is always well defined. We first prove that $\{h(x),
h(x+1), \ldots h(x+p-1)\}$ and $\{0,1,\ldots, p-1\}$ are actually the same set
of $p$ distinct numbers.
 
\begin{lemma}
 \label{lem:mapping}
 For any $p>0$, $0<a<p$, $0\leq b<p$, and $x\geq 1$, if $p$ is a prime
 number, then:
$$
\{h(x), h(x+1), \ldots h(x+p-1)\}
=\{0,1,\ldots, p-1\}
$$
\end{lemma}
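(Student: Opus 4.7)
The plan is to show that the map $i \mapsto h(x+i)$ is injective on the domain $\{0,1,\ldots,p-1\}$; since that domain and the codomain $\{0,1,\ldots,p-1\}$ both have exactly $p$ elements, injectivity will immediately upgrade to a bijection, which is precisely the set equality claimed.

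First I would fix an arbitrary pair $0 \le i < j \le p-1$ and assume for contradiction that $h(x+i) = h(x+j)$. Unfolding the definition, this means
\[
(a(x+i)+b) \equiv (a(x+j)+b) \pmod p,
\]
which simplifies to $a(j-i) \equiv 0 \pmod p$. Here is where primality enters: since $p$ is prime and $0 < a < p$, the element $a$ is a unit in $\mathbb{Z}/p\mathbb{Z}$, so we may cancel it to obtain $j - i \equiv 0 \pmod p$. But $1 \le j - i \le p-1$, which is strictly between $0$ and $p$, contradicting the congruence.

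Hence all $p$ values $h(x), h(x+1),\ldots,h(x+p-1)$ are pairwise distinct, and being elements of $\{0,1,\ldots,p-1\}$ they must exhaust it. I do not anticipate a real obstacle here; the only subtlety is the use of primality, which is exactly what guarantees that $a$ is invertible modulo $p$ (without primality, e.g.\ the $p=8, a=4$ example from the preceding paragraph, the cancellation step fails and the conclusion indeed breaks). Everything else is bookkeeping on a finite set.
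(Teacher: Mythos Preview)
Your proposal is correct and essentially identical to the paper's proof: both assume a collision $h(x+i)=h(x+j)$, reduce to $a(j-i)\equiv 0\pmod p$, and use primality together with $0<a,\,j-i<p$ to derive a contradiction. The only cosmetic difference is that you phrase the final step as ``$a$ is a unit, cancel it'' while the paper phrases it as ``$p$ would have to be composite''; these are equivalent.
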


\begin{proof}
We prove the lemma by contradiction. 
Let $A$ denote $\{h(x), h(x+1), \ldots h(x+p-1)\}$ and B denote 
 $\{0,1,\ldots, p-1\}$. 
Suppose $A\neq B$, then
since every $h(x+k)$ for $k=0,1,\ldots, p-1$
is a member of $B$ and $|A|=|B|$, there  must exist two integers $i$ and $j$, such
that $0\leq i < j \leq p-1$ and $h(x+i) = h(x+j)$. That is, 
\begin{eqnarray}
&&a(x+i)+b = a(x+j)+b \mod p\nonumber \\
&\Longleftrightarrow& ai = aj \mod p \nonumber \\
&\Longleftrightarrow& a(j-i) = 0 \mod p 
\label{eqn:primefactor}
\end{eqnarray}
Combining the fact that $0< a < p$ and $0< j-i < p$, the above
Equation~\ref{eqn:primefactor}
indicates that $p$ is not a prime number,
which is a contradiction, thus the lemma is proved.
\end{proof}

\begin{lemma}
\label{lem:exist}
For any $p>0$, $0<a<p$, $0\leq b<p$, $x\geq 1$, and $L \geq 0$, if $p$
is a prime number, then $\{i\geq 0 \mid h(x+i) \leq L\}$ is always
not empty.
\end{lemma}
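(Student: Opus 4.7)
The plan is to deduce this immediately as a corollary of Lemma~\ref{lem:mapping}. Since we only need to exhibit a single index $i \geq 0$ with $h(x+i) \leq L$, and since $L \geq 0$, it suffices to show that some value in $\{h(x), h(x+1), \ldots\}$ is equal to $0$ (or indeed equal to any element of $\{0,1,\ldots,L\}$).

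Concretely, I would look at the block $\{h(x), h(x+1), \ldots, h(x+p-1)\}$ of $p$ consecutive hash values. Lemma~\ref{lem:mapping} already asserts that this block is exactly the set $\{0,1,\ldots,p-1\}$, so in particular there is some index $i \in \{0,1,\ldots,p-1\}$ with $h(x+i) = 0$. That $i$ is a witness: $i \geq 0$ and $h(x+i) = 0 \leq L$, so $i \in \{i \geq 0 \mid h(x+i) \leq L\}$, which proves the set is nonempty.

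There really is no obstacle here: the lemma is stated in full generality over $L \geq 0$, but the weakest bound $L = 0$ already follows from the surjectivity established in Lemma~\ref{lem:mapping}, and any larger $L$ only enlarges the target set. The only small thing to note for completeness is that the hypotheses ($p$ prime, $0 < a < p$, $0 \leq b < p$, $x \geq 1$) are exactly those required to invoke Lemma~\ref{lem:mapping}, so no extra work is needed to justify its application. I would therefore expect the written proof to be just two or three sentences long, with the entire content being the appeal to Lemma~\ref{lem:mapping} and the observation that $0 \leq L$.
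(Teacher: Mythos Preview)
Your proposal is correct and matches the paper's own proof essentially verbatim: the paper invokes Lemma~\ref{lem:mapping} to conclude that $\{h(x),\ldots,h(x+p-1)\}=\{0,1,\ldots,p-1\}$, and then observes that since $L\geq 0$ some element of this block is at most $L$. Your additional remark that it suffices to hit the value $0$ is a fine sharpening but not needed.
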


\begin{proof}
  By Lemma~\ref{lem:mapping}, we know that $\{h(x),h(x+1),\ldots,$
  $h(x+p-1)\}=\{0,1,\ldots, p-1\}$. Because $L\geq 0$, there must exist
  at least one member in the set $\{h(x),h(x+1),\ldots, h(x+p-1)\}$
  whose value is not larger than $L$, so the lemma is proved.
\end{proof}

\begin{lemma}
\label{lem:exist2}
$\mathcal{N}_x^\ell$ always exists.
\end{lemma}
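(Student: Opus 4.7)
The plan is to derive Lemma 4.3 as an almost immediate corollary of Lemma 4.2. Recall that $R_\ell = \{0, 1, \ldots, \lfloor p/2^\ell \rfloor - 1\}$, so the condition $h(x+i) \in R_\ell$ is the same as the condition $h(x+i) \leq L$ where $L = \lfloor p/2^\ell \rfloor - 1$. If I can show $L \geq 0$, then Lemma 4.2 directly gives that $\{i \geq 0 \mid h(x+i) \in R_\ell\}$ is non-empty, and any non-empty set of non-negative integers has a minimum, so $\mathcal{N}_x^\ell$ exists.

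First I would verify the bound on $\ell$: since the algorithm only uses sample levels $\ell \in \{0, 1, \ldots, \lfloor \log p \rfloor\}$ (see the \texttt{maxlevel} constant and the termination condition on line~\ref{line:fail} of Algorithm~\ref{algo:gt}), we have $2^\ell \leq p$, hence $\lfloor p/2^\ell \rfloor \geq 1$ and therefore $L = \lfloor p/2^\ell \rfloor - 1 \geq 0$. Since $p$ is prime by construction in Section~\ref{subsec:hash}, and the other hypotheses $0 < a < p$, $0 \leq b < p$, $x \geq 1$ all hold for the hash function chosen by the algorithm, the hypotheses of Lemma~\ref{lem:exist} are satisfied.

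Then I would apply Lemma~\ref{lem:exist} with this value of $L$ to conclude that $\{i \geq 0 \mid h(x+i) \leq \lfloor p/2^\ell \rfloor - 1\} = \{i \geq 0 \mid h(x+i) \in R_\ell\}$ is non-empty. The minimum of this non-empty subset of $\mathbb{N}$ is $\mathcal{N}_x^\ell$, which therefore exists.

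There is no real obstacle here; the only subtlety worth stating explicitly is that the claim implicitly assumes $\ell$ lies in the legal range of sample levels (otherwise $R_\ell$ would be empty and the claim would be false). I would flag this in one sentence so the reader sees why the reduction to Lemma~\ref{lem:exist} is legitimate, and then the proof is essentially a one-line invocation of the previous lemma.
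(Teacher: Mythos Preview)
Your proposal is correct and follows essentially the same approach as the paper: both note that $\ell \le \lfloor \log p \rfloor$ by design, deduce $L = \lfloor p/2^\ell \rfloor - 1 \ge 0$, and then invoke Lemma~\ref{lem:exist} to conclude the defining set is non-empty. Your write-up is slightly more explicit about verifying the remaining hypotheses of Lemma~\ref{lem:exist} and about the well-ordering step, but the argument is the same.
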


\begin{proof}
  Recall that by design the sample level $\ell$ is no more than 
  $\maxlevel$ (Line~\ref{line:fail}, Algo.~\ref{algo:gt}), so
  $\lfloor 2^{-\ell}p \rfloor-1 \geq 0$ is always true.  
By setting $L=\lfloor 2^{-\ell}p
  \rfloor-1 \geq 0$, Lemma~\ref{lem:exist} has proved that
the set $\{i\geq 0 \mid h(x+i) \leq L\}$ is always not empty, 
so 
$\mathcal{N}_x^\ell 
= \min\{i \geq 0 \mid h(x+i) \in R_\ell\}
= \min\{i \geq 0 \mid h(x+i) \leq \lfloor
2^{-\ell}p\rfloor -1\}$ 
is always well defined and exists. 
\end{proof}

We can now focus on the design of an efficient algorithm 
for finding $\mathcal{N}_x^\ell$.

\begin{problem}
\label{prob:nexthit}
  Given integers $p > 0$, $0\leq a<p$, $0\leq u<p$, and $L\geq 0$,
  computer the following $d$: 
$$
d = \left\{
\begin{array}{l}
  \min\{i\geq 0 \mid (u+i\cdot a)\mod p \leq L \},\\
   \textit{\ \ \ \ \ \ \  \ \ \ if\ \ }
  \{i\geq 0 \mid (u+i\cdot a)\mod p \leq L \} \neq \emptyset\\
-1, \textit{\ \ \ \ \ otherwise}
\end{array}
\right .
$$
\end{problem}

Let $Z_p$ denote the ring of the nonnegative numbers modulo
$p$. Observe that the sequence of the values of $h(x+i)$, for
$i=0,1,2,\ldots$, is an arithmetic progression over $Z_p$ with 
common difference $a$. The task of finding $\mathcal{N}_x^\ell$ is
reduced to finding $d$ ($=\mathcal{N}_x^\ell$) in 
Problem~\ref{prob:nexthit} by setting:
$$u=h(x) = (ax+b) \mod p, \ \ \  L=\lfloor 2^{-\ell} p \rfloor - 1$$

Let $S=\langle u\mod p, (u+a)\mod p, (u+2a)\mod p, \ldots\rangle$.
For $i=0,1,\ldots$, let
$S[i]=(u+i\cdot a)\mod p$, the $i$th element in the sequence $S$.
Problem~\ref{prob:nexthit} can be restated as follows: find the
smallest $i\geq 0$, such that $S[i]\leq L$, or report that
such $i$ does not exist.

\subsection{Possible solutions.}
A naive method, as  used in~\cite{GT01}, is to iteratively check the
value of $(u+i\cdot a)\mod p$ for $i=0,1,\ldots$. It takes $O(d)$ time
if a nonnegative $d$ is eventually found and the procedure will even
never stop if a nonnegative $d$ does not exist at all.

A better solution is to use the {\tt Hits} function
from~\cite{PT07}. Given a fixed-sized prefix of the sequence $S$, {\tt
  Hits} can efficiently calculate the number of elements in the prefix
whose values are less than or equal to $L$ using $O(\log y)$ time,
where $y$ is the length of the prefix. Assuming a nonnegative $d$
exists for Problem~\ref{prob:nexthit}, 
if we can make a good guess of the length of the prefix, such
that $y\geq d$, meaning the prefix includes at least one element whose
value is less than or equal to $L$. Then we can use a binary search
over the prefix to locate the first element whose value is less than
or equal to $L$. Altogether, it will take $O(\log^2 y)$ time, because
there are $O(\log y)$ binary search steps and each binary search step
takes $O(\log y)$ time for {\tt Hits}. However, this method has
several unsolved issues: (1) It is not clear how to make a good guess
of the length of the prefix so that $y\geq d$, since $d$ is unknown: If
$y \gg d$, it wastes computational time; If $y< d$, we will have to
guess a longer prefix of $S$, leading to an inefficient procedure.
(2) Even if a good guess of $y$ is made, it will take
$O(\log^2 y)$  time to locate the first element whose value is less
than or equal to $L$. We look for a solution that
takes only $O(\log y)$ of time. (3) If  $d\geq 0$ does not exist, 
the binary search method of using {\tt
  Hits} fails.

Another possible solution is to use the {\tt MinHit} function
from~\cite{CTX-SICOMP09}. Given a prefix of size $y$ of the sequence
$S$, by using $O(\log y)$ of time, {\tt MinHit} can find the first
element in the prefix that is less than or equal to $L$ or return $-1$
if such element does not exist. Similar unsolved issues exist: (1) It
is not clear how to make a good guess of the size of a prefix so that
it includes at least one element that is not larger than $L$. Bad
guesses lead to wasting of computational time as we have explained in
the possible solution using {\tt Hits}. (2) If
 $d\geq 0$ does not exist, the method of using {\tt
  MinHit} fails.

\begin{figure*}[t]
\label{fig:example}
\begin{center}
{\footnotesize
\begin{tabular}{r|rr|rrr|rrr|rrrr|rrr|r}
  $i$ & $0$ & $1$ & $2$ & $3$ & $4$ & $5$ & $6$ & $7$ & $8$ & $9$ &
  $10$ & $11$ & $12$ & $13$ & $14$ & $\cdots$\\
\hline
\hline
\tallstrut $S[i]$ & ${\bf 7}$ & $11$ & ${\bf 2}$ & $6$ & $10$ & ${\bf 1}$ &
$5$ & $9$ & ${\bf 0}$ &
  $4$ & $8$ & $12$ & ${\bf 3}$ & $7$ & $11$ & $\cdots$ \\
\hline
& \tallstrut $f_0$   & & $f_1$ &  &  & $f_2$ &
 &  & $f_3$ &
 & & & $f_4$ & & & $\cdots$\\
\cline{2-17}
&\multicolumn{2}{c|}{\tallstrut$S_0$}&\multicolumn{3}{c|}{$S_1$}&\multicolumn{3}{c|}{$S_2$}&\multicolumn{4}{c|}{$S_3$}&\multicolumn{3}{c|}{$S_4$}&\multicolumn{1}{l}{$\cdots$}
\end{tabular}
}
\caption{An example sequence $S=\langle u+0\cdot a\mod p, (u+1\cdot
  a)\mod p, (u+2\cdot a)\mod p, \ldots\rangle$, where $u=7, a=4,
  p=13$. For $i=0,1,\ldots$ : (1) $S[i]$ denotes the $i$th element in
  $S$; (2) The subsequence $S_i$ denotes the $i$th round of the
  progression over $Z_p$; (3) $f_i$ is the smallest element in
  $S_i$. Suppose $L=1$ in the setting of Problem~\ref{prob:nexthit},
  the answer should be $d=5$, because $S[5]$ is the first element whose
  value is not larger than $L$.}
\end{center}
\end{figure*}

\subsection{Our solution} 
We now present our algorithm called {\tt NextHit} for
solving Problem~\ref{prob:nexthit}. Our approach is to directly modify the
internal mechanism of the {\tt MinHit} function so  it can work
with an infinitely long sequence. 

Note that the sequence $S$ is an arithmetic progression over $Z_p$
with common distance $a$.  For $i=0,1,\ldots$, let $S_i$ denote the
subsequence which is the $i$th round of the progression and $f_i$
denote the first element in $S_i$. Let $F=\langle f_0,f_1,f_2,\ldots
\rangle$ and $|S_i|$ denote the number of elements in $S_i$.
Figure~\ref{fig:example} shows an example $S$ and its 
$S_i$ subsequences and $F$ sequence.

We start the design of {\tt NextHit} with
the following critical observation from~\cite{PT07}.

\begin{observation}[Observation 2 of~\cite{PT07}]
\label{obe:first}
Sequence $\bar{F}= F\setminus\{f_0\}=\langle f_1, f_2, \ldots \rangle$
is an arithmetic progression over $Z_{a}$, with common difference
$a-r$ (or $-r$, equivalently), where $r=p\mod a$. 
That is, for every $i\geq 1$: 
\begin{equation}
\label{eqn:f}
f_i = (f_1 + (i-1)\cdot (a-r))\mod a
\end{equation}
\end{observation}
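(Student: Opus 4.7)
The plan is to prove the identity by induction on $i \geq 1$, resting on two elementary observations about the progression $S$: (i) that every $f_i$ with $i \geq 1$ already lies in $\{0, 1, \ldots, a-1\}$, so the reduction modulo $a$ in Equation~\ref{eqn:f} is meaningful; and (ii) the one-step recurrence $f_{i+1} \equiv f_i - r \pmod{a}$. Once these are in hand, the closed form in Equation~\ref{eqn:f} follows immediately by an easy induction on $i$, since $a - r \equiv -r \pmod{a}$.

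For (i), I would characterize $f_i$ through the last element $\ell_{i-1}$ of the preceding round $S_{i-1}$. By the definition of a round, the progression advances by $a$ without wrapping modulo $p$ throughout the round, and the round terminates precisely when the next addition of $a$ would reach or exceed $p$. Hence $\ell_{i-1} < p$ while $\ell_{i-1} + a \geq p$, pinning $\ell_{i-1}$ inside the interval $[p-a,\, p)$. The first element of the next round is obtained by one wrap-around, so $f_i = \ell_{i-1} + a - p$, which lies in $[0, a)$.

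For (ii), going from $f_i$ to $f_{i+1}$ corresponds to adding $a$ exactly $|S_i|$ times and subtracting $p$ exactly once (the unique wrap that ends round $i$). This gives the exact equality $f_{i+1} = f_i + |S_i| \cdot a - p$ as integers. Reducing both sides modulo $a$ kills the $|S_i| \cdot a$ term and leaves $f_{i+1} \equiv f_i - p \equiv f_i - r \pmod{a}$, where $r = p \bmod a$. Combined with (i), which forces both $f_i$ and $f_{i+1}$ into $[0, a)$, this congruence is in fact an equality in $Z_a$, and induction yields $f_i = (f_1 + (i-1)(a-r)) \bmod a$.

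The only delicate point is pinning down what a ``round'' is formally, namely a maximal block of consecutive indices $j$ for which $S[j] = u + ja - k p$ for one common nonnegative integer $k$. Once that definition is made precise, the equalities $\ell_{i-1} + a \geq p$ and $f_i = \ell_{i-1} + a - p$ become immediate, and the rest is routine modular bookkeeping; I do not anticipate a genuine obstacle here. It is worth noting that the statement deliberately excludes $f_0$ for exactly the reason exposed by (i): $f_0 = u$ can be any value in $[0, p)$ and need not lie in $[0, a)$, so the arithmetic-progression picture over $Z_a$ only emerges once the first wrap-around has occurred.
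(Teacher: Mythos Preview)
Your argument is correct. The paper itself does not prove this statement: it is quoted verbatim as Observation~2 of~\cite{PT07} and used as a black box, so there is no in-paper proof to compare against. Your two-step plan---first showing $f_i\in[0,a)$ for $i\ge 1$ via the characterization $f_i=\ell_{i-1}+a-p$ with $\ell_{i-1}\in[p-a,p)$, then deriving the one-step recurrence $f_{i+1}=f_i+|S_i|\cdot a-p\equiv f_i-r\pmod a$ and closing by induction---is exactly the natural elementary proof, and your remark about why $f_0$ must be excluded is on point.
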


Figure~\ref{fig:example} shows an example $\bar{F}$ sequence.  

The next lemma says that if $\{i\geq 0 \mid (u+i \cdot a)\mod p \leq
L\}\neq \emptyset$, $S[d]$ must be the first element whose value is not
larger than $L$ in the sequence $F$.

\begin{lemma}[A generalization of Lemma 3.6~\cite{CTX-SICOMP09}]
\label{lem:first}
If $d\neq -1$, $S[d]=f_m\in F$, where 
$m=\min\{i\geq 0 \mid f_i\leq L\}$.
\end{lemma}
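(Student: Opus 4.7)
The plan is to exploit the partition of $S$ into the rounds $S_0, S_1, S_2, \ldots$ and show that inside any single round $S_i$, the first element $f_i$ is also the minimum. Once that is established, the lemma follows by a short indexing argument.

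First I would argue the structural claim: inside each round $S_i$ the elements form the finite arithmetic progression $f_i, f_i + a, f_i + 2a, \ldots$ in $\mathbb{Z}$ (not yet reduced mod $p$), because by definition a new round begins precisely at the first position where adding $a$ would cause a wrap-around mod $p$. Hence every element of $S_i$ lies in $[f_i, p-1]$ and $f_i$ is strictly the smallest element in $S_i$; moreover, it is the \emph{earliest} element of $S_i$ inside the sequence $S$. This is the essential observation, and I expect writing it cleanly (pinning down where a round ends and the next begins) to be the only place that needs care.

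Next I would use this to prove the lemma. Suppose $d \neq -1$, so by definition $S[d]$ is the first element of $S$ with value $\leq L$. Let $S[d]$ belong to round $S_m$. Since $f_m$ is the minimum of $S_m$, we have $f_m \leq S[d] \leq L$, so $m$ lies in the set $\{i \geq 0 \mid f_i \leq L\}$. It remains to show $m$ is the minimum of that set and that $S[d] = f_m$. For the former, if some $m' < m$ satisfied $f_{m'} \leq L$, then $f_{m'}$ would be an element of $S$ occurring strictly before $S[d]$ whose value is $\leq L$, contradicting the minimality of $d$. For the latter, since $f_m$ is the first element of $S_m$ and $f_m \leq L$, the index of $f_m$ in $S$ is at most $d$; combined with $d$ being the earliest $\leq L$-index, we get that $S[d] = f_m$.

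The main obstacle, such as it is, lies in the first paragraph: one must justify that a round $S_i$ really consists of a block of positions on which the values strictly increase by $a$ without reduction mod $p$, so that the first position of the round carries the smallest value. Everything after that is essentially bookkeeping using the definition of $d$ and the ordering of the rounds within $S$.
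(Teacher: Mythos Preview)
Your proposal is correct and follows essentially the same approach as the paper's proof: both hinge on the observation that within each round $S_i$ the first element $f_i$ is the minimum, and then use minimality of $d$ to rule out any earlier round with $f_{m'}\leq L$ and any element of $S_m$ before $f_m$. The paper presents these two steps as separate contradiction arguments and leaves the ``$f_i$ is the round minimum'' fact implicit, whereas you spell that structural claim out first; otherwise the arguments coincide.
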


\begin{proof}
  First, we prove $S[d]\in F$. Suppose $S[d]\not\in F$ and $S[d]\in
  S_t$, for some $t$. Let $f_t=S[d']$, so $d'<d$. Since $S[d]$ and
  $f_t$ both belong to $S_t$ while $S[d]$ is not the first element of
  $S_t$, we have $f_t \leq S[d]\leq L$. Because $d'<d$, if $d'$ is not
  returned, $d$ will not be returned either. This yields a
  contradiction. Next, we prove $S[d]=f_m$. Suppose
  $S[d]=f_{m'}$, where $m'>m$. Let $f_m = S[d']$. Note that $d'<d$
  because $m<m'$. Since $d'<d$ and $S[d']\leq L$, if $d'$ is not
  returned, $d$ will not be returned either. This is also a
  contradiction.
\end{proof}

By observing $S$ is an arithmetic
progression over $Z_p$, it is easy to get the following lemma. 

\begin{lemma}[Lemma 3.7 of~\cite{CTX-SICOMP09}]
\label{lem:formular}
If $m=\min\{i\geq 0 \mid f_i\leq L\}$ exists, then $d=(mp-f_0+f_m)/a$.
\end{lemma}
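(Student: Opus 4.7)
The plan is to use Lemma~\ref{lem:first} to pin down which \emph{value} sits at position $d$, and then convert that value back into the \emph{index} $d$ by exploiting the fact that $S$ is a genuine arithmetic progression in the integers before being reduced modulo $p$.

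First I would invoke Lemma~\ref{lem:first} to conclude that $S[d]=f_m$. This identifies what value is sampled at position $d$, but we still have to recover $d$ itself.

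Next I would give an arithmetic interpretation of the ``rounds'' $S_t$. Write the unreduced progression as $u+i\cdot a$ and set $q_i=\lfloor(u+i a)/p\rfloor$, so that $S[i]=u+i a-q_i p$. Because $0<a<p$, the quotient $q_i$ can increase by at most one as $i$ advances by one; consequently the maximal runs of consecutive indices sharing a common quotient $q_i=t$ are exactly the subsequences $S_t$ pictured in Figure~\ref{fig:example}, and $f_t$ is the element of $S_t$ with the smallest index. Hence $S[d]=f_m$ forces $q_d=m$, that is, the integer identity $u+d\cdot a = mp+f_m$.

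Finally, solving this identity for $d$ and using $f_0=u$ yields $d=(mp-f_0+f_m)/a$, as claimed. Divisibility by $a$ is automatic, since $u+da=mp+f_m$ holds in the integers and $d$ is an integer by construction. The only conceptual step is identifying the ``$m$-th round'' with ``quotient exactly $m$ upon integer division by $p$''; once this correspondence is made explicit through the $q_i$ notation the remainder is a one-line computation, so I do not anticipate a substantial obstacle — which is consistent with the statement being attributed to prior work (Lemma 3.7 of \cite{CTX-SICOMP09}).
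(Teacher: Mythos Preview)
The paper does not actually prove this lemma: it imports it verbatim from \cite{CTX-SICOMP09} and offers only the one-sentence hint ``By observing $S$ is an arithmetic progression over $Z_p$, it is easy to get the following lemma.'' Your argument is correct and is precisely the natural way to cash out that hint --- introducing the integer quotient $q_i=\lfloor(u+ia)/p\rfloor$ makes the informal notion of ``round $t$'' rigorous, and then $u+da=mp+f_m$ drops out immediately. One wording nit: the step ``$S[d]=f_m$ forces $q_d=m$'' relies on reading Lemma~\ref{lem:first} as saying that $d$ is the \emph{index} of the first element of round $m$ (which is indeed what its proof establishes), not merely that the \emph{value} $S[d]$ equals the value $f_m$; you might make that explicit, since equal values at different indices would not by themselves pin down the quotient.
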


\subsubsection{The overall strategy for solving Problem~\ref{prob:nexthit}} 
We will first find the value of $m$,
which will give us the value of $f_m$ due to Equation~\ref{eqn:f}.
Then, we will get the value of $d$ (thus solving
Problem~\ref{prob:nexthit}) by using Lemma~\ref{lem:formular}. That
is, the task of solving Problem~\ref{prob:nexthit} can be reduced to
the task of finding $m$, which in fact, as we will explain soon, is
another instance of Problem~\ref{prob:nexthit} with a different
parameter setting.

The next lemma will be used to convert an instance of
Problem~\ref{prob:nexthit} to
another instance of Problem~\ref{prob:nexthit} which has a different set of
parameters but returns the same answer.

\begin{lemma}[A generalization of Lemma 3.8~\cite{CTX-SICOMP09}]
\label{lem:convert}
Let $d$ denote the answer to
an instance of Problem~\ref{prob:nexthit} with parameter setting: $p,
a, u, L$.  Let  $d'$ denote the answer to another instance of
Problem~\ref{prob:nexthit} with parameter setting: $p'=p, a'=p-a,
u'=(p-u+L)\mod p, L'=L$. Then:
$d = d'$
\end{lemma}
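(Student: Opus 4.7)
The plan is to prove $d = d'$ by showing that the two instances of Problem~\ref{prob:nexthit} test the same condition at every index $i \geq 0$, so their solution sets coincide and hence so do their minima (the nonexistence case $d = d' = -1$ being handled simultaneously by the empty-set case).

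First I would unfold the converted instance's test. Writing $v_i := (u + i\cdot a) \bmod p$ for the residue examined by the original instance, and substituting $p' = p$, $a' = p - a$, $u' = (p - u + L) \bmod p$, and $L' = L$, I compute
\[
(u' + i \cdot a') \bmod p' = \bigl((p - u + L) + i(p - a)\bigr) \bmod p = (L - u - i \cdot a) \bmod p = (L - v_i) \bmod p,
\]
where all multiples of $p$ are dropped when passing to the third expression. So the converted instance is asking for the smallest $i \geq 0$ with $(L - v_i) \bmod p \leq L$.

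Next I would establish the key equivalence $(L - v_i) \bmod p \leq L \iff v_i \leq L$ by a two-case split on $v_i \in \{0, 1, \ldots, p-1\}$. If $v_i \leq L$, then $L - v_i \in [0, L]$ is already its own canonical residue, which is $\leq L$. If $v_i > L$, then $L - v_i \in [-(p-1-L),\, -1]$, so the canonical residue is $p + L - v_i \in [L+1,\, p-1]$, which strictly exceeds $L$. Therefore both instances single out the same index set $\{i \geq 0 : v_i \leq L\}$, and $d = d'$ follows by taking minima (or by noting both sets are empty together).

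The calculation is short and essentially mechanical. The only subtlety worth guarding is the wraparound in the second case, which requires $L < p$ so that $p + L - v_i$ lands in $[L+1,\, p-1]$ rather than dropping back below $L$. In our usage $L = \lfloor p/2^\ell \rfloor - 1 < p$, so this is automatic, and no deeper obstacle is anticipated.
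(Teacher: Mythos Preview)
Your proposal is correct and follows essentially the same approach as the paper: both reduce the converted test to $(L - v_i) \bmod p \leq L$ with $v_i = (u+ia)\bmod p$, then show this is equivalent to $v_i \leq L$, so the two solution sets coincide. The paper organizes the equivalence as two set inclusions (the reverse one argued by contradiction), while you give a direct case split on $v_i \leq L$ versus $v_i > L$; these are the same argument in slightly different packaging.
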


\begin{proof}
  Let $P = \{i\geq 0 \mid (u+i\cdot a)\mod p \leq L\}$ and $P'= \{j
  \geq 0 \mid \left((p-u+L)\mod p + j\cdot (p-a)\right)\mod p \leq
  L\}$.  In the trivial case where $P$ and $P'$ are both empty, clearly
  $d=d'=-1$.  
  In the nontrivial case where $P$ and $P'$ are not both 
  empty, we first prove $P=P'$ by showing $P\subseteq P'$ and
  $P'\subseteq P$.  
\begin{enumerate}
\item
 $P \subseteq P'$.
Suppose $\gamma \in P$, then $\gamma\geq 0$ and $(u+\gamma
\cdot a)\mod p \leq L$. We want to prove $\gamma \in P'$.
\begin{eqnarray*}
&&[(p-u+L)\mod p + \gamma\cdot (p-a)]\mod p\\
&=& [p-u+L+\gamma\cdot(p-a)]\mod p\\
&=&[L-(u+\gamma \cdot a)]\mod p\\
&=& [L - (u+\gamma\cdot a)\mod p]\mod p
\leq L
\end{eqnarray*}
The inequality is due to the fact that 
$0\leq (u+\gamma\cdot a)\mod p\leq L$. So, $\gamma \in P'$.

\item $P' \subseteq P$.  Suppose $\gamma \in P'$, then $\gamma\geq 0$
  and $[(p-u+L)\mod p+\gamma \cdot (p-a)]\mod p \leq L$. We want to
  prove $\gamma \in P$.
\begin{eqnarray*}
&&[(p-u+L)\mod p + \gamma\cdot (p-a)]\mod p\\
&=& [L - (u+\gamma\cdot a)\mod p]\mod p
\leq L
\end{eqnarray*}

If $(u+\gamma\cdot a)\mod p > L$, say $(u+\gamma\cdot a)\mod
p=L+\sigma < P$ for some $\sigma>0$. From the above inequality, we can
have that $(-\sigma)\mod p=p-\sigma\leq L$, i.e., $L+\sigma \geq P$,
which yields a contradiction. So, $(u+\gamma\cdot a)\mod p
\leq L$, i.e., $\gamma \in P$.
\end{enumerate}

Because $d$ and $d'$ are the smallest values in
  $P$ and $P'$ respectively, the fact $P=P'\neq \emptyset$
  directly yields $d =d'$.
\end{proof}


\subsubsection{The mechanism of {\tt NextHit}}
Now we are ready to design the algorithmic mechanism of the {\tt NextHit}
algorithm using the discoveries we have presented.  Given an instance
of Problem~\ref{prob:nexthit} with parameter setting $p,a,u,L$, the
easy case is $u\leq L$, for which we will obviously return
$d=0$. Otherwise, because $S[d]=f_m$ (Lemma~\ref{lem:first}), we will
reduce the task of solving Problem~\ref{prob:nexthit} with parameter
setting $p,a,u,L$ to the task of finding $m$ such that $f_m$ is the
first element whose value is not larger than $L$ in the $\bar{F}$
sequence. Observe that the $\bar{F}$ sequence is also an arithmetic
progression over a smaller ring $Z_a$ with common distance $a-r$
(Observation~\ref{obe:first}), so the task of finding $m$ is actually
a new instance of Problem~\ref{prob:nexthit} with the following
parameter setting:
\begin{equation}
\label{eqn:reduction}
p_{new}=a,\ \ \  a_{new}=a-r,\ \ \  u_{new}=f_1, \ \ \  L_{new}=L
\end{equation}
Note that the returned value of the above instance actually will be
equal to $m-1$, because the $\bar{F}$ sequence is one indexing based.
After $m$ is recursively calculated, we will use
Observation~\ref{obe:first} to directly obtain $f_m$:
$$f_m=(f_1+(m-1)(a-r))\mod a$$
Then we can use Lemma~\ref{lem:formular} to
directly calculate $d$, which is the answer to the original instance
of Problem~\ref{prob:nexthit}:
$$
d=(mp-f_0+f_m)/a
$$

However, the recursion (Equations~\ref{eqn:reduction}) may not always
be effective because the new progression's common distance $a-r$ may
not be much smaller than the old progression's common distance $a$ (We
will explain later why the size of the common distance is relevant to both
the time and space complexities of the {\tt NextHit}
algorithm). Fortunately, by using Lemma~\ref{lem:convert}, we can
overcome this difficulty by converting the recursive instance to
another instance which returns the same answer but works on a
progression with a smaller common distance. We summarize the two
possible reductions in the following.

\emph{Case 1}: $a-r\leq a/2$.  We want to work with
  $a-r$. Problem~\ref{prob:nexthit} is recursively reduced to a new
  instance of Problem~\ref{prob:nexthit} of a smaller size that finds
  $m$ over sequence $\bar{F}$ by setting:
$$p_{new}=a,\ \ \  a_{new}=a-r,\ \ \  u_{new}=f_1,\ \ \ L_{new}=L$$

\emph{Case 2: $r<a/2$.} We want to work with $r$. We first recursively
reduce Problem~\ref{prob:nexthit} to the same setting as in Case 1,
which will be further converted to the following parameter setting
because of Lemma~\ref{lem:convert}:

\hspace*{-6mm}
\begin{tabular}{c}
  \tallstrut$p_{new}=a,\ \ \  a_{new}=a-r,\ \ \  u_{new}=f_1,\ \ \ L_{new}=L$\\
  \tallstrut$\Downarrow \textit{(Lemma~\ref{lem:convert})}$\\
  \tallstrut$p_{new}=a, \  a_{new}=r, \ u_{new}=(a-f_1+L)\mod a, \ L_{new}=L$
\end{tabular}

\bigskip 

After adding some trivial recursion exit conditions, we present the
pseudocode of {\tt NextHit} in
Algo.~\ref{algo:nexthit}, which directly reflects the algorithmic
idea that we have presented.
Once {\tt NextHit} is clear, the calculation of  {\tt DirectSample}
becomes trivial and is presented in 
 Algo.~\ref{algo:directsample}.

\begin{algorithm}[t]
\small
\caption{$\ds(x,\ell,p,a,b)$}
\label{algo:directsample}
\KwIn{$x\geq 1$, $0\leq \ell \leq \maxlevel$, 
$p>0$, $0< a < p$, $0\leq b < p$ \tcp*[f]{$h(x)=(ax+b)\mod p$}}
\KwOut{$x+ \min\{i \geq 0 \mid h(x+i) \in R_\ell$, if $\min\{i \geq 0
  \mid h(x+i) \in R_\ell \}\neq \emptyset$; $-1$, otherwise.}

\BlankLine

$\mathcal{N} \leftarrow \nexthit(p,a,(ax+b)\mod p,
\lfloor 2^{-\ell}p \rfloor-1)$\;
\lIf{$\mathcal{N}=-1$}{\Return{$-1$}} \tcp*{This will not happen if
  $p$ is a prime number.}
\lElse{
  \Return{$x+\mathcal{N}$}\;
}
\end{algorithm}


\begin{algorithm}[t]
\small
\caption{$\nexthit(p, a, u, L)$}
\label{algo:nexthit}
\KwIn{$p>0$, $0\leq a < p$, $0\leq u < p$, $L\geq 0$}
\KwOut{$d = \min\{i\st 0\leq i\leq n, (u+i\cdot a)\text{ mod }p \leq
L\}$, if such $d$ exists; $-1$, otherwise.}

\BlankLine

\tcc*[f]{Recursive call exit conditions}

\lIf{$u\leq L$}{\Return{$d\leftarrow 0$}}\;
\label{line:exit1}
\lElseIf{$a=1$}{\Return{$d\leftarrow p-u$}}\;
\label{line:exit2}
\lElseIf{$a=0$}{\Return{$d\leftarrow -1$}}\;
\label{line:exit3}

\BlankLine

\tcc*[f]{Prepare for the recursive call: compute $|S_0|$, $f_1$, and $r$}

\lIf{$(p-u)\mod a = 0$}{$|S_0| \leftarrow (p-u)/a$}\;
\label{line:prep1}
\lElse{$|S_0| \leftarrow \lfloor (p-u)/a\rfloor +1$}\;
\label{line:prep2}
$f_1 \leftarrow (u+|S_0|\cdot a) \mod p$\;
\label{line:prep3}
$r \leftarrow p\mod a$\;
\label{line:prep4}

\BlankLine

\tcc*[f]{Recursive calls}

\lIf{$a-r \leq a/2$}{$d\leftarrow \nexthit(a,a-r,f_1,L)$}\tcp*{Case 1}
\label{line:recursion1}
\lElse {$d\leftarrow \nexthit(a,r,(a-f_1+R)\mod a,L)$}\tcp*{Case 2}
\label{line:recursion2}

\BlankLine

\tcc*[f]{Calculate and return $d$}

\lIf{$d=-1$}{\Return{$d$}\;}
\label{line:return1}
\Else{
  $f_{d+1} \leftarrow (f_1 + (a-r)*d)\mod a$\;
  \label{line:return2}
  \Return{
    $d \leftarrow (d*p+f_{d+1}+p-u)/a$\;
  }
  \label{line:return3}
}
\end{algorithm}

\begin{theorem}[Correctness and time and space complexity of
  $\nexthit$]
\label{thm:nexthit}
$\nexthit(p,a,u,L)$ solves Problem~\ref{prob:nexthit} using $O(\log
a)$ time and $O(\log p\cdot \log a)$ bits of space.  
When $p$ is a prime number, $\nexthit(p,a,u,L)$ solves
Problem~\ref{prob:nexthit} using $O(\min(\log a, \log d))$ time and
$O(\log p \cdot \min(\log a, \log d))$ bits of space, where $d$ is
the value returned by $\nexthit$.
\end{theorem}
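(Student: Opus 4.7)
The plan is to prove correctness by induction on the recursion depth, and to prove the complexity bounds by analyzing how the second parameter $a$ and the returned value $d$ shrink along the recursion.

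For correctness, the three base cases at Lines~\ref{line:exit1}--\ref{line:exit3} are direct: when $u \leq L$, index $0$ already satisfies the bound so $d = 0$; when $a = 1$ and $u > L$, the progression $u, u+1, \ldots \pmod p$ first wraps to $0 \leq L$ at index $p - u$; and when $a = 0$ the sequence is constantly $u > L$, so no valid index exists and $-1$ is correctly returned. In the recursive step, I will verify that Lines~\ref{line:prep1}--\ref{line:prep4} correctly compute $|S_0|$, $f_1$, and $r$ directly from their definitions, so that by Lemma~\ref{lem:first} the answer satisfies $d = (mp - u + f_m)/a$ where $m$ is the first index whose entry in $F$ is at most $L$. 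Finding $m - 1$ is itself an instance of Problem~\ref{prob:nexthit} on the progression $\bar F$, which by Observation~\ref{obe:first} lives in $Z_a$ with common difference $a - r$; Case~1 recurses on this progression directly, while Case~2 first applies Lemma~\ref{lem:convert} to replace common difference $a - r$ with $r$, preserving the answer. Lines~\ref{line:return2}--\ref{line:return3} then reconstruct $f_m$ via Equation~\ref{eqn:f} and $d$ via Lemma~\ref{lem:formular}. A $-1$ returned by the recursive call is simply propagated, and by Lemma~\ref{lem:exist2} this never occurs when $p$ is prime.

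For the general time and space bounds, each call performs $O(1)$ arithmetic operations on $O(\log p)$-bit integers before recursing at most once, and the second parameter strictly halves at each level since $a - r \leq a/2$ in Case~1 and $r < a/2$ in Case~2. Thus the recursion depth is at most $\lceil \log a \rceil$, giving $O(\log a)$ time in the word-RAM model and, with $O(\log p)$ bits per stack frame, $O(\log p \cdot \log a)$ bits of total space.

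For the sharper prime-case bound, I will additionally show the recursion depth is $O(\log d)$, which combined with $O(\log a)$ yields $O(\min(\log a, \log d))$. Let $d_k$ denote the answer at the $k$-th nested level. From Lemma~\ref{lem:formular}, $d_{k-1} = (m_k\, p_{k-1} - u_{k-1} + f_{m_k})/a_{k-1}$ with $m_k = d_k + 1$, so crude bounds give $d_{k-1} \geq d_k \cdot (p_{k-1}/a_{k-1}) - O(1)$. After the first recursive call $p_{\mathrm{new}} = a$ while $a_{\mathrm{new}} \leq a/2$, so $p_{k-1}/a_{k-1} \geq 2$ for all $k \geq 2$, giving a chain inequality $d_{k-1} \geq 2\, d_k$ whenever $d_k \geq 1$. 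Iterating yields $d \geq 2^{\Omega(k)}$, hence depth $k = O(\log d)$; the descent into the $u \leq L$ base case adds only a constant number of extra levels because at its parent $d_{k-1} = |S_0| \geq 1$. The principal obstacle is precisely this prime-case chain argument: the top-level ratio $p/a$ can be arbitrarily close to $1$, so the initial descent must be treated separately, and the final base-case step must be accounted for cleanly. Once this chain is established, the rest of the proof reduces to mechanical verification of the reductions already encapsulated in Observation~\ref{obe:first} and Lemmas~\ref{lem:first}, \ref{lem:formular}, and~\ref{lem:convert}.
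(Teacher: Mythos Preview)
Your proposal is correct and follows essentially the same approach as the paper: induction on recursion depth for correctness via Observation~\ref{obe:first} and Lemmas~\ref{lem:first}--\ref{lem:convert}, halving of $a$ for the $O(\log a)$ bound, and geometric shrinkage of the returned value for the prime-case $O(\log d)$ bound. Your prime-case analysis is in fact more careful than the paper's own---you explicitly isolate the top-level step (where $p/a$ need not be $\geq 2$) and the final base-case level, whereas the paper simply asserts that the ``length of the progression'' halves at every recursive call.
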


\begin{proof}

  \emph{Correctness.}  Recall that $\nexthit(p, a, u, L)$ should
  return $d=\min\{i \geq 0 \mid (u+i\cdot a)\mod p \leq L\}$, if such
  $d$ exists; otherwise, it will return $d=-1$. The three exit conditions
  (Line~\ref{line:exit1}--~\ref{line:exit3}) capture all the possible
  cases where the algorithm can return and exit directly.  At the end of
  Line~\ref{line:exit3}, we know $S[d]$ does not occur in $S_0$ and
  $p>a\geq2$, so we are ready to reduce the task of finding $d$ to the
  task of finding $m$ over the sequence $\bar{F}$.
  Line~\ref{line:prep1}--~\ref{line:prep4} correctly calculate $f_1$
  and $r$ as the preparation work for the subsequent recursive calls
  at Line~\ref{line:recursion1}--~\ref{line:recursion2}.
  Line~\ref{line:recursion1} calculates the 0-based index of the first
  element in the sequence $\bar{F}$ that is not larger than $L$. By
  Lemma~\ref{lem:convert}, we know the recursive call at
  Line~\ref{line:recursion2} will return the same result as that from
  Line~\ref{line:recursion1}. In the case where $-1$ is returned by
  the recursive call (Line~\ref{line:recursion1}
  or~\ref{line:recursion2}), it means no element in the $\bar{F}$
  sequence is less than or equal to $L$. In that case, we will return
  $d=-1$ (Line~\ref{line:return1}).  Otherwise,
  Line~\ref{line:return2}--~\ref{line:return3} calculates and
  returns the value of $d$ using the results returned by the preceding recursive
  call (Lemma~\ref{lem:formular}).

  \emph{Time Complexity.}  We assume that the additions,
  multiplications, and divisions take unit time.  At each recursive
  call (Line \ref{line:recursion1} or \ref{line:recursion2}), we have
  $a_{new} \leq a/2$, plus the fact that in the worst case the
  recursion will return when $a=1$ (Line~\ref{line:exit2}), so the
  depth of the recursions is no more than $\log a$. 
 Because the time cost for the local computation
  in each recursive call is constant
  (Line~\ref{line:exit1}--\ref{line:prep4} and
  \ref{line:return1}--\ref{line:return3}), the time complexity of {\tt
    NextHit} is $O(\log a)$. 

  In the particular case where $p$ is a prime number, which is the
  case in the use of {\tt NextHit} for distributed basic counting in
  this paper, we know the value of $d$ returned by {\tt NextHit} is
  always non-negative (Lemma~\ref{lem:exist}). In this case, the
  length of the arithmetic progression that the caller of {\tt NextHit}
  works with is $d+1$. Because at each recursive call, we have the
  common distance in the new progression reduced by at least half, so
  the length of the progression that the next recursive {\tt NextHit} will work
  with is no more than a half of the caller's
  progression. This observation implies that the depth of the recursion
  is no more than $\log d$, so the overall time cost is bounded by
  $O(\log d)$. Comparing with the time cost of {\tt NextHit} in the
  general case, we see the time cost of {\tt NextHit} in this
  particular case is $O(\min(\log a, \log d))$.

  \emph{Space Complexity.}  In each recursive call, {\tt NextHit} needs
  to store a constant number of local variables such as $p, a, u, L$,
  etc. Since $p$ dominates $a$, $u$ and $L$ (if $L\geq p$, then
  $\nexthit()$ returns without recursive calls
  (Line~\ref{line:exit1}).), each recursive call needs $O(\log
  p)$ stack space. Since the depth of the recursion is no
  more than $O(\log a)$, which we have explained in the time complexity
  analysis, the space cost of the {\tt NextHit} algorithm is upper bounded
  by $O(\log^2 p)$ bits.  

  In the  case where $p$ is a prime number, the depth of the
  recursion is bounded by $O(\log d)$, which we have explained the time
  complexity analysis, so the total space cost is no
  more than $O(\log p \cdot \log d)$. Comparing with the space cost of
  {\tt NextHit} in 
  the general case, we get the time cost of {\tt NextHit} in this
  particular case is $O(\log p \cdot \min(\log a, \log d))$.
\end{proof}

\begin{corollary}[Correctness and time and space complexity of $\ds$]
\label{cor:directsample}
$\ds(x,\ell, p, a, b)$ finds the next $\ell$-level sample location on
or after the $x$th stream element using $O(\min(\log a, \log d))$ time
and $O(\log n\cdot \min(\log a, \log d))$  bits of space, where $n$ is an upper
bound of the stream size and $x+d$ is the value returned by $\ds$.
\end{corollary}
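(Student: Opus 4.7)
The plan is to derive this corollary directly from Theorem 4.1 and the reduction already spelled out at the beginning of Section 4. Algorithm~\ref{algo:directsample} is just a thin wrapper: it sets $u = (ax+b) \bmod p = h(x)$ and $L = \lfloor 2^{-\ell} p \rfloor - 1$, and then calls $\nexthit(p,a,u,L)$. With these choices the arithmetic progression $\langle u, (u+a)\bmod p, (u+2a)\bmod p, \ldots \rangle$ that $\nexthit$ operates on is exactly $\langle h(x), h(x+1), h(x+2), \ldots \rangle$, and the condition $S[i] \le L$ is equivalent to $h(x+i) \in R_\ell$. Hence the value $\mathcal{N}$ returned by $\nexthit$ equals $\mathcal{N}_x^\ell = \min\{i \ge 0 \st h(x+i)\in R_\ell\}$, and $\ds$ returns $x+\mathcal{N}_x^\ell$, which by definition is the next $\ell$-level sample location on or after position $x$. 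The $\mathcal{N}=-1$ branch is handled trivially: since $p$ is prime in our use of $\ds$, Lemma~\ref{lem:exist2} guarantees $\mathcal{N}_x^\ell$ always exists, so this branch is unreachable.

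For the complexity bounds, I would simply invoke the prime-$p$ part of Theorem~\ref{thm:nexthit}. Because $p$ is chosen to be a prime in $[10n, 20n]$, the time cost of the internal $\nexthit$ call is $O(\min(\log a, \log d))$, where $d = \mathcal{N}_x^\ell$ is the offset returned. Outside $\nexthit$, $\ds$ performs only a constant number of arithmetic operations on $O(\log n)$-bit numbers, so the total time remains $O(\min(\log a, \log d))$. For the space bound, the same theorem gives $O(\log p \cdot \min(\log a, \log d))$ bits for $\nexthit$; since $p \le 20n$ we have $\log p = O(\log n)$, producing the stated $O(\log n \cdot \min(\log a, \log d))$ bits. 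The few extra local variables of $\ds$ itself contribute only $O(\log n)$ bits and are absorbed into this bound.

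The proof is essentially a translation, so there is no real obstacle; the only point that needs a little care is making the correspondence between the two formulations precise, i.e.\ checking that $\lfloor 2^{-\ell}p\rfloor - 1 \ge 0$ so that $L$ is a valid parameter for Problem~\ref{prob:nexthit} (which holds because $\ell \le \maxlevel$, as already used in the proof of Lemma~\ref{lem:exist2}) and observing that $\log p = O(\log n)$ because $p \le 20n$. Both steps are immediate, so the corollary will follow as a routine specialization of Theorem~\ref{thm:nexthit}.
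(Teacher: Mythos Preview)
Your proposal is correct and follows the same approach as the paper: both derive the corollary by observing that the cost of $\ds$ is dominated by the $\nexthit$ call, invoking Theorem~\ref{thm:nexthit} in the prime-$p$ case, and using $p\in[10n,20n]$ to convert $\log p$ to $O(\log n)$. Your write-up is in fact more explicit than the paper's (which is a two-line proof), spelling out the correctness reduction and the $L\ge 0$ check, but the underlying argument is identical.
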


\begin{proof}
  The time and space cost of $\ds$ is dominated by the $\nexthit$
  subroutine. By Theorem~\ref{thm:nexthit} and combining the fact that
  $p$ is a random prime number chosen from $[10n, 20n]$, we can get
  the claim proved.
 \end{proof}

\section{Boosting the distributed basic counting via direct
  sampling}
\label{sec:new}

Now we present the complete picture of our new method for basic
counting on the union of multiple streams, followed by its correctness
proof and the analysis of its time, space, and communication
complexities.

\subsection{Algorithm description}
The formal description of the algorithm is presented in
Algo.~\ref{algo:new}. Note that one instance of the algorithm does not
automatically yield an $(\epsilon,\delta)$-estimate of $U$, but
produces an estimate  within a factor of $\epsilon$ of $U$
with a constant probability. The standard technique to reduce the
failure probability from a constant to the user-input parameter
$\delta$ is to run in parallel $O(\log(1/\delta))$ independent
instances of the algorithm and return the median of the results from
those instances.

The overall structure of the algorithm is still coordinated adaptive
sampling, but uses the hash function $h$ and thus can use the direct
sampling technique as a subroutine for a faster sampling procedure.
We first randomly pick a pairwise independent hash
function $h$ (Line~\ref{line:hash}) as defined in
Section~\ref{subsec:hash}. The hash function $h$ is shared by both the
stream processors (Alice and Bob) and the referee for coordinated
sampling. Each processor starts with  an empty sample of
size $\alpha=\samplesize$ and the sample level $\ell=0$
(Line~\ref{line:init}). 

\paragraph{The use of direct sampling.} 
The algorithm does not check the hash value of every stream element.
After processing a particular stream element $a_i$
(Line~\ref{line:if}--~\ref{line:discard}), the algorithm calls
$\ds(i+1,\ell, p, a, b)$ (Line~\ref{line:next}) to find the next
stream element that needs to be checked at the current sample level
$\ell$. The algorithm will go to sleep  until that element arrives.
That is, all the stream elements that do not have the possibility to
be selected regardless of its value  will be ignored
(Line~\ref{line:sleep}). When the element whose  location index
was returned by {\tt DirectSample} arrives, it will be selected into
the sample if it is a 1-bit 
(Line~\ref{line:if}). Note that we do not need to check the element's hash value 
as it has been guaranteed to be in $R_\ell$ by {\tt DirectSample}.
After the element is processed
(Line~\ref{line:if}--~\ref{line:discard}), the function call at
Line~\ref{line:next} gives the next stream location, at which the
stream element will need to be checked. By using the direct sampling
technique, our method intuitively is much faster than GT's method, which 
we will show later through both theoretical analysis and empirical study. 

In the case where the sample becomes overflow after the insertion of a
new element (Line~\ref{line:while}), the sample level will be increased by one
(Line~\ref{line:leveladd}). All the stream elements that are currently
in the sample but whose hash values do not belong to $R_\ell$ will be
discard from the sample (Line~\ref{line:discard}).  The sample level
will keep increasing until the sample is not
overflowed~(Line~\ref{line:while}). There is a possibility that the
sample level can exceed $\maxlevel$. If that happens, the algorithm
fails, but we will  later show the probability of this event is very
low.
The procedure at the referee to answer a query for $U$
(Line~\ref{line:referee-start}--~\ref{line:referee-end}) is the same
as the one in Algo.~\ref{algo:gt}.

\begin{algorithm}[t]
\small
  \caption{Distributed basic counting using direct sampling.}
  \label{algo:new}

\KwIn{Two geographically distributed bit streams $A=\{a_1, a_2, \ldots,
  a_n\}$ and $B=\{b_1, b_2, \ldots, b_n\}$, processed in parallel
  by  Alice and Bob, respectively.} 

\KwOut{$U(A,B) = \sum_{i=1}^n (a_i \lor b_i)$, returned by the referee.}

\smallskip

{\bf Randomly pick a pairwise independent hash function:
  \label{line:hash}
} 
$h(x) = (ax+b) \mod p$,
where $p$ is a prime number randomly picked from $[10n,20n]$, and
$a,b$ are randomly picked from $\{0,1,\ldots, p-1\}$ and $a\neq 0$.

\smallskip

\underline{\bf Alice}:

\smallskip 

Initialize an empty sample $S$ of size $\alpha=\samplesize$;
$\ell\leftarrow 0$\;
\label{line:init}

$i\leftarrow 1$
\label{line:start}
\tcp*{$\ds(1,0,p,a,b)\equiv 1$, the first location 
  to check at sample level $0$.}

\SetKwFor{SleepUntil}{Sleep until}{\ }{endSleep until}

\SleepUntil{$a_i$ {\bf arrives}}{
 \label{line:sleep}
  
  \If{$a_i=1$\label{line:if}}
 {
   $S \leftarrow S \cup \{i\}$\label{line:insert}\;
  \While (\tcp*[f]{Sample is full.}){$|S| > \alpha$\label{line:while}}{
      $\ell = \ell + 1$ \label{line:leveladd}\;
      \lIf {$\ell > \maxlevel$}{Exit} \tcp*{Algorithm
        fails.} \label{line:failnew} 
      Discard every $x\in S$ such that $h(x) \notin R_\ell$;
      \label{line:discard}
    }
  }
  $i \leftarrow \ds(i+1,\ell, p, a, b)$\label{line:next} \tcp*{The next location that needs
    to be checked.}
}

\smallskip

\underline{\bf Bob}: 
(Symmetric to Alice)

\smallskip

\underline{\bf Referee}: \label{line:referee} 
\tcp*[f]{Upon receiving a query for $U$.}

\smallskip 
Receive $S_A,\ell_A$ from Alice and $S_B,\ell_B$ from Bob; 
\label{line:referee-start}

$\ell^*\leftarrow\max(\ell_A, \ell_B)$;
\label{line:ell}

\lIf{$\ell_A < \ell^*$}{
     Discard every $x\in S_A$ such that 
        $h(x) \notin R_{\ell^*}$\;
}
\lElseIf{$\ell_B < \ell^*$}{
     Discard every $x\in S_B$ such that 
        $h(x) \notin R_{\ell^*}$\;
}
\Return{$|S_A \cup S_B| / P_{\ell^*}$}; \label{line:referee-end} 
\end{algorithm}

\subsection{Correctness}
Note that our method still uses the coordinated adaptive sampling on the high
level, but uses {\tt DirectSample} for speedup, so the correctness proof
of our method follows a parallel structure of the proof by Gibbons and
Tirthapura~\cite{GT01}. We present the entire proof here,
because the hash function used in our method
is different and also for completeness.

The only difference between Algo.~\ref{algo:gt} and~\ref{algo:new}
lies in the behavior of the stream processors. The processors in
Algo.~\ref{algo:gt} process every incoming stream element, whereas the
processors Algo.~\ref{algo:new} ignore those stream elements that have
no possibility to be selected into the sample. Due to the correctness
of {\tt DirectSample} used by Algo.~\ref{algo:new} for faster
sampling, given the same data stream and using the same hash function,
a processor using Algo.~\ref{algo:gt} will end up with the same
collection of selected elements as  a processor
using Algo.~\ref{algo:new} after processing the stream.  Recall
that the behavior of the referee from both algorithms is
identical. So, given the same distributed streams and the same hash
function for coordinated sampling, the answer returned by the referee
of Algo.~\ref{algo:gt} will be the same as the one returned by the
referee of Algo.~\ref{algo:new}. Thus, we can prove the correctness of 
Algo.~\ref{algo:new} by showing the correctness proof of
Algo.~\ref{algo:gt}.

Let $1_A=\{1\leq i \leq n \mid a_i = 1\}$, $1_B = \{1\leq i \leq n
\mid b_i = 1\}$, and $1_U = \{1\leq i \leq n \mid a_i \lor b_i = 1\}$.
Because the sampling procedures at both the processors and the referee
are coordinated, the sample obtained by the referee at the query time
can be viewed as the one whose elements are directly sampled by the
referee from the stream $1_U$. The quality of the answer returned by
the referee is solely determined by this sample, so our correctness proof
will be focused on the analysis of this sample.  Recall that $U(A,B) =
\sum_{i=1}^n (a_i \lor b_i)$ and thus $|1_U|=U(A,B)$. We will use $U$
to represent $U(A,B)$ if the context is clear.  The following process
is hypothetical, visualized only to serve the correctness proof, and
does not actually happen at the referee. The referee maintains
$\maxlevel + 1$ samples of level $\ell=0,1,\ldots,
\maxlevel$.  For each $i \in 1_U$, the element $i$ is
selected into the sample of level $\ell$ if and only if $h(i)\in R_\ell$.

\begin{definition}
  For $\ell=0,1,\ldots, \maxlevel$ and each $i \in 1_U$,
  let the indicator random variable
$$
  X_{\ell, i} = \left \{
\begin{array}{ll}
1, & \textit{ if $h(i)\in R_\ell$} \\
0, & \textit{ otherwise}
\end{array}
\right.
$$
\end{definition}

\begin{definition}
  For $\ell=0,1,\ldots, \maxlevel$, let the random variable
$X_\ell = \sum_{i\in 1_U} X_{\ell,i}$
\end{definition}

\begin{definition}
\label{def:bad}
For $\ell=0,1,\ldots, \maxlevel$,  
we say: (1) 
the sample of level $\ell$ is \emph{bad}, if $|X_\ell/P_\ell-U|>\epsilon U$. 
(2) random event $B_\ell$ happens if level $\ell$ is bad.
\end{definition}

\begin{fact}
\label{fact:pairwise}
For a particular $\ell \in \{0,1,\ldots, \maxlevel\}$
and any $i,j\in 1_U$ and $i\neq j$,
the random variables $X_{\ell, i}$ and $X_{\ell,j}$  are pairwise
independent. 
\end{fact}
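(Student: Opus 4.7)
The plan is to derive pairwise independence of the indicator variables $X_{\ell,i}$ and $X_{\ell,j}$ directly from the pairwise independence of the underlying hash function $h$, which was stated as one of the two standard properties of $h(x)=(ax+b)\bmod p$ in Section~\ref{subsec:hash}. Since each $X_{\ell,i}$ is determined by $h(i)$ alone (namely $X_{\ell,i}=\mathbf{1}[h(i)\in R_\ell]$), the events that determine these indicators are built from disjoint slices of the joint distribution of $(h(i),h(j))$, so independence should transfer in a clean way.

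Concretely, the first step is to verify that for each fixed $\ell$, the marginal distribution of $X_{\ell,i}$ is Bernoulli with parameter $P_\ell=|R_\ell|/p=\lfloor p/2^\ell\rfloor/p$, using the fact that $h(i)$ is uniform on $\{0,1,\ldots,p-1\}$. Next, I would expand the joint probability
\[
\Pr[X_{\ell,i}=1 \land X_{\ell,j}=1]=\sum_{y_1\in R_\ell}\sum_{y_2\in R_\ell}\Pr[h(i)=y_1 \land h(j)=y_2],
\]
and apply the pairwise independence of $h$ (restated in Section~\ref{subsec:hash}) termwise. This decouples the double sum into $\bigl(\sum_{y_1\in R_\ell}\Pr[h(i)=y_1]\bigr)\bigl(\sum_{y_2\in R_\ell}\Pr[h(j)=y_2]\bigr)=P_\ell\cdot P_\ell$, matching $\Pr[X_{\ell,i}=1]\cdot\Pr[X_{\ell,j}=1]$.

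Once the $(1,1)$ case is established, the other three joint outcomes $(0,0)$, $(0,1)$, $(1,0)$ follow either by the identical argument with $R_\ell$ replaced by its complement in $\{0,1,\ldots,p-1\}$ on the appropriate coordinate, or more economically, by marginalization: $\Pr[X_{\ell,i}=1 \land X_{\ell,j}=0]=\Pr[X_{\ell,i}=1]-\Pr[X_{\ell,i}=1 \land X_{\ell,j}=1]$, and similarly for the remaining cases. This reduces everything to the single computation above.

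There is no real obstacle here; the only thing to be careful about is to keep the roles of $i,j\in 1_U$ and the randomness (which lies entirely in the choice of $a,b$ defining $h$) clearly separated, and to emphasize that $i\neq j$ is exactly the hypothesis under which the pairwise independence of $h$ fires. Full independence of the whole family $\{X_{\ell,i}\}_{i\in 1_U}$ does \emph{not} hold, which is why only the pairwise statement is claimed and why Chebyshev (rather than Chernoff) will be the relevant concentration tool in subsequent lemmas.
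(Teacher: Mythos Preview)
Your argument is correct and is exactly the routine verification one would expect. Note, however, that the paper does not actually supply a proof for this statement: it is labeled a \emph{Fact} and is left unproved, relying implicitly on the two properties of $h$ stated in Section~\ref{subsec:hash}. Your write-up simply spells out what the paper takes for granted, so there is nothing to compare.
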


\begin{fact}
\label{fact:pr}
  For any $\ell \in \{0,1,\ldots, \maxlevel\}$
  and any $i\in 1_U$: $1/2^{\ell+1}\leq \Pr[X_{\ell, i} = 1] = P_\ell\leq 1/2^{\ell}$
\end{fact}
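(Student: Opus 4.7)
The plan is to derive the equality $\Pr[X_{\ell,i}=1] = P_\ell$ from the uniformity of the hash function, and then sandwich $P_\ell = \lfloor p/2^\ell \rfloor/p$ between the two claimed fractions using elementary properties of the floor function.

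First I would invoke the well-known fact (already noted in Section~\ref{subsec:hash}) that when $a$ is drawn uniformly from $\{1,\ldots,p-1\}$ and $b$ uniformly from $\{0,\ldots,p-1\}$, the value $h(i)=(ai+b)\bmod p$ is uniformly distributed over $\{0,1,\ldots,p-1\}$ for each fixed $i$. By the definition of $X_{\ell,i}$, we then have
\[
\Pr[X_{\ell,i}=1] \;=\; \Pr[h(i)\in R_\ell] \;=\; \frac{|R_\ell|}{p} \;=\; \frac{\lfloor p/2^\ell\rfloor}{p} \;=\; P_\ell,
\]
which is the central equality of the claim.

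Next I would establish the two inequalities. The upper bound $P_\ell \leq 1/2^\ell$ is immediate from $\lfloor p/2^\ell\rfloor \leq p/2^\ell$. For the lower bound I would use the elementary observation that $\lfloor x\rfloor \geq x/2$ whenever $x\geq 1$ (check the two cases $x\in[1,2)$ and $x\geq 2$ separately). Since $\ell \leq \lfloor \log p\rfloor$, we have $p/2^\ell \geq 1$, so this observation applies with $x=p/2^\ell$, yielding $\lfloor p/2^\ell\rfloor \geq p/2^{\ell+1}$ and hence $P_\ell \geq 1/2^{\ell+1}$.

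There is essentially no obstacle here: the entire content is bookkeeping once uniformity of $h(i)$ is in hand. The only mildly delicate point is to make sure the lower bound survives at the extreme value $\ell=\lfloor\log p\rfloor$, where $\lfloor p/2^\ell\rfloor$ can collapse to $1$; the halving inequality above handles exactly this boundary case, since $p < 2^{\ell+1}$ there guarantees $1/p > 1/2^{\ell+1}$.
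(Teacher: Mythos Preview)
Your argument is correct. The paper states this as a \emph{Fact} without proof, so there is no authorial argument to compare against; your verification---uniformity of $h(i)$ for the equality, $\lfloor x\rfloor\le x$ for the upper bound, and the halving inequality $\lfloor x\rfloor\ge x/2$ for $x\ge 1$ (valid since $\ell\le\lfloor\log p\rfloor$ forces $p/2^\ell\ge 1$) for the lower bound---is exactly the kind of routine check the authors are implicitly leaving to the reader.
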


\begin{lemma}
\label{lem:exp}
For each $\ell = 0,1,\ldots, \maxlevel$:
$\e[X_\ell] = UP_\ell$ and  $U/2^{\ell+1}\leq \e[X_\ell]\leq U/2^\ell$ 
\end{lemma}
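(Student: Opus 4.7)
The plan is to prove this lemma by a direct application of linearity of expectation, leveraging the two facts stated just above. I would first write $X_\ell = \sum_{i \in 1_U} X_{\ell,i}$ by its definition, so that by linearity of expectation (which does not require any independence assumption), $\e[X_\ell] = \sum_{i \in 1_U} \e[X_{\ell,i}]$.

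Next I would observe that each $X_{\ell,i}$ is a $\{0,1\}$-valued indicator, so $\e[X_{\ell,i}] = \Pr[X_{\ell,i} = 1] = P_\ell$ by Fact~\ref{fact:pr}. Since the summation ranges over $i \in 1_U$ and $|1_U| = U$ by definition of $U$, I obtain $\e[X_\ell] = U \cdot P_\ell$, which establishes the first equality. The second chain of inequalities then follows immediately by substituting the bounds $1/2^{\ell+1} \leq P_\ell \leq 1/2^\ell$ from Fact~\ref{fact:pr} into the identity $\e[X_\ell] = U P_\ell$.

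There is no real obstacle here: the pairwise independence of Fact~\ref{fact:pairwise} is not needed for the expectation computation (it will matter later only for bounding the variance), and the bounds on $P_\ell$ are essentially a restatement of $P_\ell = \lfloor p/2^\ell \rfloor / p$ from the definition of the sampling rule. The entire argument is a two-line use of linearity of expectation plus substitution.
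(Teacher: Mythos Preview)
Your proposal is correct and follows essentially the same approach as the paper: linearity of expectation applied to the sum of indicators, then substitution of the bounds on $P_\ell$ from Fact~\ref{fact:pr}. Your remark that Fact~\ref{fact:pairwise} is not needed here is also accurate.
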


\begin{proof}
$
\e[X_\ell] = \e\left[\sum_{i\in 1_U} X_{\ell, i}\right]
    = \sum_{i\in 1_U} \e\left[X_{\ell, i}\right] 
   = \sum_{i\in 1_U} \Pr\left[X_{\ell,i}=1\right]
= \sum_{i\in 1_U} P_\ell = UP_\ell
$, where the second equality uses the linearity of expectation. 
Using Fact~\ref{fact:pr}, the lemma is proved. 
\end{proof}

\begin{lemma}
\label{lem:var}
For each $\ell = 0,1,\ldots, \maxlevel$:
  $\var[X_\ell]\leq \e[X_\ell]$
\end{lemma}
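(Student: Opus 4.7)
The plan is to exploit pairwise independence (Fact~\ref{fact:pairwise}) to break the variance of the sum into a sum of variances, and then bound each individual variance using the fact that the summands are $0/1$-valued indicators. Since the $X_{\ell,i}$ are only pairwise (not mutually) independent, I must be careful that the identity I invoke only requires pairwise independence --- fortunately, the standard identity $\var[Y_1+\cdots+Y_k]=\sum_j \var[Y_j]$ holds as soon as all covariances vanish, and pairwise independence is exactly enough to guarantee this.

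Concretely, first I would write
\[
\var[X_\ell] \;=\; \var\!\left[\sum_{i\in 1_U} X_{\ell,i}\right] \;=\; \sum_{i\in 1_U} \var[X_{\ell,i}],
\]
justifying the second equality by expanding $\var[\sum_i X_{\ell,i}]=\sum_i \var[X_{\ell,i}] + \sum_{i\neq j}\mathrm{Cov}[X_{\ell,i},X_{\ell,j}]$ and noting that Fact~\ref{fact:pairwise} forces each covariance to be zero. Next, since $X_{\ell,i}$ is a Bernoulli indicator with $\Pr[X_{\ell,i}=1]=P_\ell$ (Fact~\ref{fact:pr}), I would use $\var[X_{\ell,i}]=P_\ell(1-P_\ell)\leq P_\ell$.

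Summing over $i\in 1_U$ gives
\[
\var[X_\ell] \;\leq\; \sum_{i\in 1_U} P_\ell \;=\; U P_\ell \;=\; \e[X_\ell],
\]
where the last equality is Lemma~\ref{lem:exp}. This completes the argument.

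I do not expect any real obstacle here; the only subtlety worth flagging explicitly in the written proof is that pairwise (rather than mutual) independence suffices to kill the cross terms in the variance of the sum, which is the entire reason the coordinated sampling scheme can afford to use a pairwise-independent hash family in the first place.
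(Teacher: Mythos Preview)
Your proposal is correct and follows essentially the same approach as the paper: split the variance of the sum into a sum of variances via pairwise independence (Fact~\ref{fact:pairwise}), bound each Bernoulli variance $P_\ell(1-P_\ell)$ by $P_\ell$, and sum to obtain $UP_\ell=\e[X_\ell]$. The paper's write-up differs only cosmetically, expanding $\var[X_{\ell,i}]$ as $\e[X_{\ell,i}^2]-\e[X_{\ell,i}]^2$ before dropping the squared term.
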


\begin{proof}
  \begin{eqnarray*}
  \var[X_\ell] &=& \var\left[\sum_{i\in 1_U} X_{\ell, i}\right]\\
  &=& \sum_{i\in 1_U}\var\left[ X_{\ell, i}\right] 
  \textit{\ \ \ \ \ \ \ (Fact~\ref{fact:pairwise})}\\
 &=& \sum_{i\in 1_U} \left(\e\left[X^2_{\ell,i}\right]
    -\e^2\left[X_{\ell,i}\right]\right)\\
  &=& \sum_{i\in 1_U} \left(\Pr[X_{\ell,i}=1] 
     - \left(\Pr[X_{\ell,i}=1]\right)^2\right)\\
&\leq& \sum_{i\in 1_U} \Pr[X_{\ell,i}=1] \\
&=& \e[X_\ell]
 \end{eqnarray*}
\end{proof}

\begin{lemma}
\label{lem:b}
  For each $\ell = 0,1,\ldots, \maxlevel$: 
  $\Pr[B_\ell] \leq 2^{\ell+1} / (\epsilon^2 U)$
\end{lemma}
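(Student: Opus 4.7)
The plan is to apply Chebyshev's inequality directly, leveraging the mean and variance bounds already established in Lemma \ref{lem:exp} and Lemma \ref{lem:var}. The event $B_\ell$ is defined by $|X_\ell/P_\ell - U| > \epsilon U$, so I would first rescale this to an event about deviation of $X_\ell$ from its own expectation: since $\e[X_\ell] = UP_\ell$ (Lemma \ref{lem:exp}), the inequality $|X_\ell/P_\ell - U| > \epsilon U$ is equivalent to $|X_\ell - \e[X_\ell]| > \epsilon U P_\ell$.

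Next I would apply Chebyshev's inequality to get $\Pr[B_\ell] \leq \var[X_\ell]/(\epsilon U P_\ell)^2$. Using Lemma \ref{lem:var} to bound $\var[X_\ell] \leq \e[X_\ell] = UP_\ell$, this simplifies to $1/(\epsilon^2 U P_\ell)$. Finally, I would invoke the lower bound $P_\ell \geq 1/2^{\ell+1}$ from Fact \ref{fact:pr}, which gives $1/P_\ell \leq 2^{\ell+1}$ and therefore $\Pr[B_\ell] \leq 2^{\ell+1}/(\epsilon^2 U)$, matching the claim.

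There is no real obstacle here; the preceding lemmas were specifically designed to feed into a Chebyshev bound, and the pairwise independence established in Fact \ref{fact:pairwise} is exactly what makes $\var[X_\ell] \leq \e[X_\ell]$ hold (and hence justifies applying the second-moment method even though the $X_{\ell,i}$ are not fully independent). The only minor care point is to make sure the rescaling to $|X_\ell - \e[X_\ell]|$ is done cleanly and that the inequality $P_\ell \geq 1/2^{\ell+1}$ (rather than $P_\ell = 1/2^\ell$) is used, since $P_\ell = \lfloor p/2^\ell\rfloor/p$ is only approximately $1/2^\ell$.
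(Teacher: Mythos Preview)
Your proposal is correct and follows essentially the same route as the paper: rewrite $B_\ell$ as a deviation of $X_\ell$ from $\e[X_\ell]=UP_\ell$, apply Chebyshev, bound $\var[X_\ell]\le \e[X_\ell]$ via Lemma~\ref{lem:var}, and finish with the lower bound $P_\ell\ge 1/2^{\ell+1}$ (equivalently $\e[X_\ell]\ge U/2^{\ell+1}$). The only cosmetic difference is that the paper phrases the last step using Lemma~\ref{lem:exp} directly rather than Fact~\ref{fact:pr}, but the content is identical.
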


\begin{proof}
  \begin{eqnarray*}
    \Pr[B_\ell] 
    &=& \Pr\left[\left|\frac{X_\ell}{P_\ell} 
         - U\right|> \epsilon U\right] 
    = \Pr\left[\left| X_\ell - UP_\ell \right| 
        >\epsilon UP_\ell\right]\\
    &=& \Pr\left[\left| X_\ell - \e[X_\ell] \right| 
        >\epsilon \e[X_\ell]\right] 
     \textit{\ \ \ \ \ \ \ (Lemma~\ref{lem:exp})}\\
    &<& \frac{\var[X_\ell]}{(\epsilon \e[X_\ell])^2} 
     \textit{\ \ \ \ \ \ \ (Chebyshev's Inequality)}\\
    &\leq& \frac{1}{\epsilon^2 \e[X_\ell]}
     \textit{\ \ \ \ \ \ \ (Lemma~\ref{lem:var})}\\
   &=& \frac{2^{\ell+1}}{\epsilon^2 U}
     \textit{\ \ \ \ \ \ \ (Lemma~\ref{lem:exp})}
 \end{eqnarray*}

\end{proof}

\begin{definition}
\label{def:omega}
Let $\omega$ be the lowest numbered level such that $\e[X_\omega]<\alpha/2$.
\end{definition}

If $U\leq \alpha$, Algo.~\ref{algo:new} will certainly return
the exact value of $U$, so we only consider the interesting case where
$U > \alpha = \samplesize$.

\begin{lemma}
Level $\omega$ exists and  $0 < \omega < \maxlevel$.
\end{lemma}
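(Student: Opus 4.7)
The plan is to leverage the expectation sandwich $U/2^{\ell+1} \leq \e[X_\ell] \leq U/2^\ell$ from the earlier expectation lemma to pin down $\omega$ between $0$ and $\maxlevel$. The trivial case $U \leq \alpha$ has already been disposed of, so throughout the argument I may assume $U > \alpha = \samplesize$.

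To see that $\omega > 0$, I plug $\ell = 0$ into the lower bound, getting $\e[X_0] \geq U/2 > \alpha/2$. Thus level $0$ fails the defining inequality $\e[X_\omega] < \alpha/2$, so by the minimality built into the definition of $\omega$ we must have $\omega \geq 1$.

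For existence and for the upper bound $\omega < \maxlevel$, I exhibit an explicit witness level $\ell^\star < \maxlevel$ at which $\e[X_{\ell^\star}] < \alpha/2$; then minimality gives $\omega \leq \ell^\star$. Using the upper bound $\e[X_{\ell^\star}] \leq U/2^{\ell^\star}$, the required inequality reduces to $\ell^\star > \log(2U/\alpha)$, so setting $\ell^\star = \lceil \log(2U/\alpha) \rceil + 1$ works. It remains to check $\ell^\star < \maxlevel$. Since $U \leq n$ and the prime $p$ is drawn from $[10n,20n]$, we have $\maxlevel = \lfloor \log p \rfloor \geq \lfloor \log(10n) \rfloor$, which exceeds $\log n$ by at least a constant; meanwhile $\alpha \geq 60$ forces $\log(2U/\alpha) \leq \log(n/30)$, so $\ell^\star < \log n$, leaving a comfortable gap below $\maxlevel$.

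The bulk of this argument is elementary algebra; the only place demanding care is juggling the floors and ceilings in $\maxlevel = \lfloor \log p \rfloor$ and $\alpha = \lceil 60/\epsilon^2\rceil$, and verifying that the algorithm's design choices (the lower bound $p \geq 10n$ on the hash prime and the constant $60$ baked into the sample size) do produce a strict, non-vanishing gap between $\log(2U/\alpha)$ and $\lfloor \log p \rfloor$. No probabilistic content is needed beyond invoking the expectation lemma in both its lower and upper forms.
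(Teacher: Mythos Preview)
Your argument is correct and follows the same overall strategy as the paper: use the expectation bounds from Lemma~\ref{lem:exp} to rule out level $0$ and to exhibit a witness level below $\lfloor\log p\rfloor$ at which the expectation drops under $\alpha/2$. The paper's execution is slightly leaner in two places. First, for $\omega>0$ it observes that $P_0=1$ exactly, so $\e[X_0]=U>\alpha$, rather than invoking the weaker lower bound $\e[X_0]\geq U/2$. Second, for the upper bound it takes $\ell^\star=\lfloor\log p\rfloor-1$ directly as the witness and checks $\e[X_{\lfloor\log p\rfloor-1}]\leq U/2^{\lfloor\log p\rfloor-1}\leq n/n=1<\alpha/2$, which bypasses the construction of $\ell^\star=\lceil\log(2U/\alpha)\rceil+1$ and the subsequent floor/ceiling bookkeeping you carry out. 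Your route buys a tighter localization of $\omega$ (you essentially show $\omega=\Theta(\log(U/\alpha))$), which is not needed here but could be useful downstream; the paper's route is shorter and avoids the numerical verification that $\log 30>2$.
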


\begin{proof}
(1) $\omega > 0$ is because $\e[X_0] = U > \alpha$
but $\e[X_\omega] < \alpha/2$.
(2) We prove $\omega < \maxlevel$ by
showing that $\e\bigl[X_{\maxlevel-1}\bigr] < \alpha /2$.
Note that $p$ is prime number from $[10n, 20n]$, so 
$\maxlevel-1 > \log n$.
It follows that 
\begin{eqnarray*}
\e\left[ X_{\maxlevel-1} \right] 
&\leq& \frac{U}{2^{\maxlevel-1}}  
  \textit{\ \ \ \ \ \ \ (Lemma~\ref{lem:exp})}\\
&\leq& \frac{U}{2^{\log n}}
\leq \frac{n}{2^{\log n}}
= 1 < \frac{\alpha}{2} = \frac{\samplesize}{2} 
\end{eqnarray*}

\end{proof}

\begin{lemma}
  \label{lem:omega}
    $\sum_{\ell=0}^{\omega}\Pr[B_\ell] < \frac{4}{15}$.
\end{lemma}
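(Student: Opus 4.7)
The plan is to combine the per-level tail bound from Lemma~\ref{lem:b} with a geometric sum and then exploit the minimality of the index $\omega$ in Definition~\ref{def:omega}. Specifically, I would first write
\[
\sum_{\ell=0}^{\omega}\Pr[B_\ell] \;\leq\; \sum_{\ell=0}^{\omega}\frac{2^{\ell+1}}{\epsilon^2 U} \;=\; \frac{2}{\epsilon^2 U}\bigl(2^{\omega+1}-1\bigr) \;<\; \frac{2^{\omega+2}}{\epsilon^2 U},
\]
which reduces the task to giving a good upper bound on the ratio $2^{\omega}/U$.

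Next, I would invoke the minimality of $\omega$. Since $\omega > 0$ (proved in the preceding lemma), level $\omega-1$ is a valid nonnegative level at which the defining inequality fails, hence $\e[X_{\omega-1}] \geq \alpha/2$. Pairing this with the upper bound in Lemma~\ref{lem:exp} (namely $\e[X_{\omega-1}] \leq U/2^{\omega-1}$) yields $U/2^{\omega-1} \geq \alpha/2$, i.e.\ $2^{\omega} \leq 4U/\alpha$. Plugging this into the display above gives
\[
\sum_{\ell=0}^{\omega}\Pr[B_\ell] \;<\; \frac{2^{\omega+2}}{\epsilon^2 U} \;\leq\; \frac{16}{\alpha\,\epsilon^2}.
\]

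Finally, I would use the chosen sample size $\alpha = \lceil 60/\epsilon^2\rceil \geq 60/\epsilon^2$ to conclude
\[
\sum_{\ell=0}^{\omega}\Pr[B_\ell] \;\leq\; \frac{16\,\epsilon^2}{60\,\epsilon^2} \;=\; \frac{16}{60} \;<\; \frac{4}{15},
\]
which is the stated bound. There is no real obstacle here: the argument is a routine geometric-sum-plus-minimality calculation, and the only small point of care is ensuring $\omega-1$ is an admissible level (so that the ``$\omega-1$ does not satisfy the defining inequality'' step is legitimate), which is exactly what the previous lemma guarantees by showing $\omega > 0$.
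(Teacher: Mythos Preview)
Your proof is correct and follows essentially the same argument as the paper's own proof (geometric sum via Lemma~\ref{lem:b}, then minimality of $\omega$ together with Lemma~\ref{lem:exp} and the value of $\alpha$). One cosmetic slip: $16/60 = 4/15$, not $<$; the strict inequality in the claim is already secured by your earlier step $\frac{2}{\epsilon^2 U}(2^{\omega+1}-1) < \frac{2^{\omega+2}}{\epsilon^2 U}$.
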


\begin{proof}
By Definition~\ref{def:omega}, we
have $\e[X_{\omega-1}] \geq \alpha/2 \geq 30/\epsilon^2$. 
Combining the fact that $\e[X_{\omega-1}] \leq U/2^{\omega-1}$
(Lemma~\ref{lem:exp}), we have
$U/2^{\omega-1} \geq 30/\epsilon^2$, 
i.e.,
$2^{\omega-1}/(\epsilon^2U)\leq 1/30$.
It follows that,
 \begin{eqnarray*}
    \sum_{\ell=0}^{\omega}\Pr[B_\ell] 
     &<&  \sum_{\ell=0}^{\omega} \frac{2^{\ell+1}}{\epsilon^2 U}
     \textit{\ \ \ \ \ \ \ (Lemma~\ref{lem:b})} \\ 
     &=& \frac{2}{\epsilon^2 U}\left(2^{\omega+1}-1\right)
     < \frac{8\cdot 2^{\omega-1}}{\epsilon^2 U}\leq \frac{4}{15}
  \end{eqnarray*}
\end{proof}

Recall that $\ell^*$ is the sample level at which the referee answer
the query (Line~\ref{line:l}, Algo.~\ref{algo:gt} and
Line~\ref{line:ell} of~\ref{algo:new}).

\begin{definition}
\label{def:s}
For $\ell = 0, 1, \ldots, \maxlevel + 1$, 
we say the random event $S_\ell$ happens if $\ell = \ell^*$, i.e., the
referee uses the sample  level $\ell^* = \ell$ to answers the query. 
\end{definition}

\begin{lemma}
  \label{lem:s}
$\Pr[S_{\omega+1} \lor \ldots \lor S_{\maxlevel + 1}]
\leq 1/30
$
\end{lemma}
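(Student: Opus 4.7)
The plan is to observe that $\Pr[S_{\omega+1} \lor \cdots \lor S_{\maxlevel+1}]$ is just $\Pr[\ell^* > \omega]$, reduce this to the single event $\{X_\omega > \alpha\}$ on the hypothetical referee-level sample, and close with a Chebyshev bound. The target $1/30$ is manufactured directly from $\alpha = \samplesize$ together with $\epsilon<1$.

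First I would establish the set inclusion $\{\ell^* > \omega\} \subseteq \{X_\omega > \alpha\}$. Suppose Alice's final level $\ell_A$ exceeds $\omega$ and let $t^*$ be the first step at which her level crosses $\omega$ inside the while loop (Lines~\ref{line:while}--\ref{line:discard}). By the deterministic coordinated nature of $h$, whenever Alice is sitting at level $\ell$ with stream prefix up through index $t^*$, her stored sample equals $\{i \in 1_A : i \leq t^*,\, h(i) \in R_\ell\}$. For the loop to increment past level $\omega$ rather than exit, this set (with $\ell = \omega$) must have size strictly greater than $\alpha$. Hence $\ell_A > \omega$ forces $Y_\omega^A := |\{i \in 1_A : h(i) \in R_\omega\}| > \alpha$, and symmetrically $\ell_B > \omega$ forces $Y_\omega^B > \alpha$. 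Since $1_A, 1_B \subseteq 1_U$, both $Y_\omega^A$ and $Y_\omega^B$ are bounded above by $X_\omega$, so the union event $\ell^* > \omega$ implies $X_\omega > \alpha$.

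Next I would bound $\Pr[X_\omega > \alpha]$ by Chebyshev's inequality. Definition~\ref{def:omega} gives $\e[X_\omega] < \alpha/2$, so on $\{X_\omega > \alpha\}$ we have $|X_\omega - \e[X_\omega]| > \alpha/2$. Combining this with Lemma~\ref{lem:var} yields
\[
\Pr[X_\omega > \alpha] \;\leq\; \frac{\var[X_\omega]}{(\alpha/2)^2} \;\leq\; \frac{\e[X_\omega]}{(\alpha/2)^2} \;<\; \frac{2}{\alpha}.
\]
Since $\alpha \geq 60/\epsilon^2 \geq 60$ for $\epsilon<1$, this is at most $1/30$, as required.

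The one non-routine step is the reduction in the second paragraph. Level promotion is a sequential, data-dependent process, and the subtle point is to translate the runtime event ``the while loop pushed Alice past level $\omega$'' into the clean static statement $|\{i \in 1_A : h(i) \in R_\omega\}| > \alpha$. This translation rests on the invariant --- a consequence of coordinated sampling with a deterministic hash --- that Alice's stored sample at any moment depends only on her current level and the prefix she has processed, never on the particular schedule of intermediate promotions. Once that invariant is in hand, the rest is a one-shot Chebyshev calculation, with the sample-size constant $60$ chosen precisely to absorb the factor of $2$ in $2/\alpha$.
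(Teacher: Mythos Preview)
Your proof is correct and follows essentially the same approach as the paper: reduce $\Pr[\ell^* > \omega]$ to $\Pr[X_\omega > \alpha]$ and finish with Chebyshev using $\e[X_\omega] < \alpha/2$ and $\var[X_\omega] \leq \e[X_\omega]$. The paper's own write-up is terser on the reduction step (it simply asserts that the event forces $X_\omega > \alpha$), whereas you spell out the invariant about Alice's sample contents and the containment $1_A, 1_B \subseteq 1_U$; this added care is welcome but does not constitute a different route.
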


\begin{proof}
%
  If the random event $S_{\omega+1} \lor \ldots \lor S_{\maxlevel+ 1}$ happens, it means the algorithm cannot use a level
 numbered smaller than $\omega+1$ to answer the query. It follows that 
  $X_\omega > \alpha$. So, 
 \begin{eqnarray*}
 \Pr\left[S_{\omega+1} \lor \ldots \lor S_{\maxlevel + 1}\right]
&\leq& \Pr[X_\omega > \alpha] 
= \Pr\left[X_\omega - \e[X_\omega] > \alpha - \e[X_\omega]\right] \nonumber\\
&\leq& \Pr \left[X_\omega - \e[X_\omega] > \alpha - \frac{\alpha}{2}\right]
\textit{\ \ \ \ \ \ \ \ (Definition~\ref{def:omega})}\nonumber\\
&\leq& \frac{\var[X_\omega]}{(\alpha/2)^2}
\textit{\ \ \ \ \ \ \ \ (Chebyshev's Inequality)}\nonumber\\
&\leq& \frac{\e[X_\omega]}{\alpha^2/4}
\textit{\ \ \ \ \ \ \ \ (Lemma~\ref{lem:var})}\\
&\leq& \frac{2}{\alpha}
\textit{\ \ \ \ \ \ \ \ (Lemma~\ref{def:omega})}\\
&=& \frac{\epsilon^2}{30}
\leq \frac{1}{30}
 \end{eqnarray*}
\end{proof}

\begin{lemma}
\label{lem:eps}
  One instance of Algo.~\ref{algo:new} returns an $\epsilon$-estimate of $U$
  with a constant probability of at
  least $3/10$. 
\end{lemma}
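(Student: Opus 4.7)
The plan is to identify the failure event of one run of Algo.~\ref{algo:new}, split it according to which sample level $\ell^*$ the referee ends up using, and then combine the two probability bounds already established in Lemmas~\ref{lem:s} and~\ref{lem:omega} by a union bound. Because coordinated sampling means that the quantity $|S_A\cup S_B|/P_{\ell^*}$ returned at Line~\ref{line:referee-end} equals $X_{\ell^*}/P_{\ell^*}$, Definition~\ref{def:bad} tells us the returned estimate fails to be an $\epsilon$-estimate of $U$ exactly when the event $B_{\ell^*}$ occurs at the level the referee actually uses.

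First I would write
\[
\Pr[\text{failure}] \;\leq\; \Pr\Bigl[\bigcup_{\ell=0}^{\omega}(S_\ell\cap B_\ell)\Bigr] \;+\; \Pr\bigl[S_{\omega+1}\lor\cdots\lor S_{\maxlevel+1}\bigr],
\]
treating the two regimes $\ell^*\leq\omega$ and $\ell^*>\omega$ separately; the latter term also absorbs the rare outright-failure event at Line~\ref{line:failnew}. The second term is bounded by $1/30$ by Lemma~\ref{lem:s}. For the first term, since $S_\ell\cap B_\ell\subseteq B_\ell$, a plain union bound reduces it to $\sum_{\ell=0}^{\omega}\Pr[B_\ell]$, which Lemma~\ref{lem:omega} bounds by $4/15$.

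Adding the two estimates gives $\Pr[\text{failure}] < 4/15 + 1/30 = 9/30 = 3/10$, so the success probability strictly exceeds $7/10$, which in particular satisfies the claimed lower bound of $3/10$ and, more importantly, clears the $1/2$ threshold required for the median trick mentioned in the algorithm description to amplify success probability to $1-\delta$ using $O(\log(1/\delta))$ independent copies. The main subtlety, rather than an obstacle, is recognizing that no conditional probabilities $\Pr[B_\ell\mid S_\ell]$ are needed; the inclusion $S_\ell\cap B_\ell\subseteq B_\ell$ is precisely what lets the two prior lemmas decouple into a clean arithmetic sum.
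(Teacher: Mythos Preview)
Your proposal is correct and follows essentially the same argument as the paper: decompose the failure event as $\bigvee_{\ell}(S_\ell\wedge B_\ell)\vee S_{\lfloor\log p\rfloor+1}$, split at level $\omega$, bound the low-level part by $\sum_{\ell\le\omega}\Pr[B_\ell]<4/15$ via Lemma~\ref{lem:omega} and the high-level part by $1/30$ via Lemma~\ref{lem:s}, and add. Your added remark that the resulting success probability actually exceeds $7/10$ (not merely the stated $3/10$) is also what the paper's computation yields; the lemma's phrasing is simply weaker than what is proved.
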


\begin{proof}
  The algorithm can fail for two possibilities: (1) it
  stops at a sample level $\ell \leq \maxlevel$, but the
  level is \emph{bad}; or (2) it reaches the sample level $\maxlevel + 1$ and the algorithm just exits
(Line~\ref{line:failnew}, Algo.~\ref{algo:new}). So,

  \begin{eqnarray*}
  \Pr[\textit{failure}] 
&=& \Pr\left[(S_0 \land B_0) \lor (S_1 \land B_1) \lor \ldots \lor
(S_{\maxlevel} \land B_{\maxlevel}) \lor
S_{\maxlevel + 1}\right] \\
&\leq& \Pr\left[B_0 \lor \ldots \lor B_{\omega}\right] 
+ \Pr\left[S_{\omega+1} \lor \ldots \lor S_{\maxlevel + 1}\right]\\
&\leq& \sum_{\ell=0}^{\omega} \Pr[B_\ell]
+ \Pr\left[S_{\omega+1} \lor \ldots \lor S_{\maxlevel + 1}\right]\\
&\leq& \frac{4}{15} + \frac{1}{30} 
\textit{\ \ \ \ \ \ \ \ (Lemma~\ref{lem:omega} and~\ref{lem:s})}\\
&=& \frac{3}{10}
 \end{eqnarray*}
\end{proof}

Let $\beta = \lceil 24\ln(1/\delta)\rceil$. The next theorem shows
that we can further reduce the failure probability of Algo.~\ref{algo:new} to
the user input parameter $\delta$ by running $\beta$ independent
instances of Algo.~\ref{algo:new} and returning the median of the answers
of those $\beta$ instances as the estimate of $U$.

\newpage

\begin{theorem}[Correctness of Algo.~\ref{algo:new}]
\label{thm:correctnew}
  The median of the answers returned by $\beta=\lceil 24\ln(1/\delta)\rceil$
  independent instances (using different hash functions picked randomly and
  independently) of Algo.~\ref{algo:new} is an
  $(\epsilon,\delta)$-estimate of $U$.
\end{theorem}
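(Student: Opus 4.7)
The plan is to apply the standard median-of-estimators boosting argument. By Lemma~\ref{lem:eps}, a single instance of Algo.~\ref{algo:new} returns an $\epsilon$-estimate of $U$ with probability at least $7/10$. Because the $\beta$ instances use independently chosen hash functions and operate on the same input streams, their outcomes are mutually independent. Let $\hat{U}_1,\ldots,\hat{U}_\beta$ be the estimates they produce, and for each $i$ let $Y_i$ be the indicator of the event that $\hat{U}_i$ fails to be an $\epsilon$-estimate of $U$. Then $Y_1,\ldots,Y_\beta$ are mutually independent Bernoulli random variables with $\Pr[Y_i=1]\leq 3/10$.

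Next I would make the elementary observation that links the median to individual successes: if strictly more than half of the $\hat{U}_i$ lie in the interval $[(1-\epsilon)U,(1+\epsilon)U]$, then the median must also lie in that interval (otherwise either more than $\beta/2$ values would exceed $(1+\epsilon)U$ or more than $\beta/2$ would be below $(1-\epsilon)U$, contradicting that more than $\beta/2$ lie inside). Consequently, if the median is not an $\epsilon$-estimate of $U$, then at least $\beta/2$ of the $Y_i$ are equal to $1$, so
\[
\Pr[\textit{median is not an $\epsilon$-estimate of $U$}]
\;\leq\; \Pr\!\left[\sum_{i=1}^{\beta} Y_i \geq \beta/2\right].
\]

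Now I would bound this tail with a Chernoff inequality. Since $\mu := \e[\sum_i Y_i]\leq 3\beta/10 < \beta/2$, the event $\sum_i Y_i \geq \beta/2$ corresponds to $(1+\lambda)\mu$ with $\lambda \geq 2/3$. The multiplicative Chernoff bound then gives $\Pr[\sum_i Y_i \geq \beta/2] \leq \exp(-\lambda^2 \mu/3) \leq \exp(-\beta/24)$. Plugging in $\beta = \lceil 24\ln(1/\delta)\rceil$ yields $\exp(-\beta/24)\leq \delta$, which completes the argument.

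The only mild obstacle is picking the Chernoff form so that the multiplicative constant in $\beta$ works out to $24$; otherwise the proof is a straightforward assembly of Lemma~\ref{lem:eps}, independence of the instances, the median--majority observation, and a single Chernoff tail bound.
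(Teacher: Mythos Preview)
Your argument is the same standard median-boosting as the paper's: invoke Lemma~\ref{lem:eps}, use independence of the instances, reduce failure of the median to ``at least half of the $Y_i$ equal $1$'', and finish with a Chernoff tail bound. The only substantive difference is how the Chernoff step is executed. You apply the bound directly to $Y=\sum_i Y_i$ with unknown mean $\mu\le 3\beta/10$; the claimed inequality $\exp(-\lambda^2\mu/3)\le\exp(-\beta/24)$ then involves $\lambda$ and $\mu$ linked by $(1+\lambda)\mu=\beta/2$, so one must verify it holds for \emph{every} $\mu\in(0,3\beta/10]$ (and, when $\mu$ is small enough that $\lambda>1$, switch to the $\exp(-\lambda\mu/3)$ form of Chernoff). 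This check goes through, but you have not written it. The paper avoids this wrinkle by introducing an auxiliary $X=\sum_i X_i$ with $\Pr[X_i=1]=3/10$ exactly, applying Chernoff to $X$ with the concrete values $\e[X]=3\beta/10$ and $\lambda=2/3$ to get $\Pr[X>\beta/2]\le\delta^{48/45}\le\delta$, and then using the stochastic-dominance observation $\Pr[Y>\beta/2]\le\Pr[X>\beta/2]$. Both routes are valid; the paper's coupling trick makes the constants explicit without a case analysis on $\mu$.
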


\begin{proof}
  For $i=1,2,\ldots,
  \beta$, let the indicator random variable $Y_i=1$ if the $i$th instance does
  not return an $\epsilon$-estimate; $Y_i = 0$, otherwise.  Due to
  Lemma~\ref{lem:eps}, we already have $\Pr[Y_i=1]\leq 3/10$.  Let
  binomial random variable $Y=\sum_{i=1}^{\beta} Y_i$, then
$$
  \e[Y] = \e\left[\sum_{i=1}^{\beta} Y_i\right] 
= \sum_{i=1}^{\beta} \e[Y_i] 
= \sum_{i=1}^{\beta} \Pr[Y_i=1] \leq \frac{3}{10} \beta
$$
 If the median of the $\beta$ independent answers 
 is
  not an $\epsilon$-estimate, it means that more than $\beta/2$ answers
  are not $\epsilon$-estimate, i.e., $Y > \beta / 2$. We want to bound 
$\Pr[Y > \beta/2]$.

For proof purpose, we define another binomial random variable
$X=\sum_{i=1}^{\beta} X_i$, where each $\Pr[X_i=1]=3/10$, and thus 
$\e[X] = 3 \beta / 10$.

 \begin{eqnarray*}
    \Pr\left[X > \frac{\beta}{2}\right]
&=& \Pr\left[X > \left(1+\frac 2 3\right)\e[X]\right]
\leq \exp\left(\frac{-\e[X](2/3)^2}{3}\right)
\textit{\ \ \ \ \ \ \ \ (Chernoff Bound)}\\
&\leq& \exp\left(-\frac{3}{10} 24\left(\ln\frac 1\delta\right) \frac 4{27}\right)
\leq \delta^{48/45} \leq \delta
 \end{eqnarray*}

Note that $X$ and $Y$ are both binomial
 random variables of the form $X=B(\beta, p_1)$ and 
$Y=B(\beta, p_2)$, but $p_1 = 3/10 \geq p_2$, 
so it is obvious: 
$$
\Pr\left[Y > \frac \beta 2\right] 
\leq \Pr\left[X > \frac \beta 2\right] 
\leq \delta
$$
That is, the probability that median of the answers returned by the
$\beta$ independent instances of Algo.~\ref{algo:new} is not an
$\epsilon$-estimate is no more than $\delta$.
\end{proof}

\subsection{Time complexity}
\label{subsec:time}
\begin{theorem}[Worst-case total time cost]
\label{thm:worst-total}
  Algo.~\ref{algo:new} spends $\Theta(n\log(1/\delta))$
  time in the worst case for processing a bit
  stream of size $n$ for maintaining  an
  $(\epsilon,\delta)$-estimate of $U$.
\end{theorem}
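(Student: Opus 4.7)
The plan is to prove the $\Theta$-bound by establishing a matching upper and lower bound, and then multiplying by the $\beta = \lceil 24\ln(1/\delta)\rceil$ independent instances that Theorem~\ref{thm:correctnew} runs in parallel. So I will argue that a single instance of Algo.~\ref{algo:new} takes $\Theta(n)$ time in the worst case.

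For the upper bound on a single instance, I would decompose the work into three categories: (i) the cost of the $\ds$ calls at Line~\ref{line:next}, (ii) the constant amount of bookkeeping per ``awakened'' iteration of the sleep loop, and (iii) the cost of the level-bump block (Lines~\ref{line:while}--\ref{line:discard}). Let $d_1, d_2, \ldots, d_k$ be the successive skip distances returned by $\ds$ during the run. Because the algorithm only wakes up at positions no larger than $n$, we have $\sum_{j=1}^k (d_j+1) \leq n$, and in particular $k \leq n$. By Corollary~\ref{cor:directsample}, the $j$th call costs $O(\log(d_j+1)) = O(d_j+1)$ (using $\log x \leq x$), so the total cost of (i) is $O(\sum_j (d_j + 1)) = O(n)$, and (ii) contributes $O(k) = O(n)$. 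For (iii), each execution of the while loop increments $\ell$ by one, so the block runs at most $\maxlevel = O(\log n)$ times across the whole stream, and each iteration scans the sample of size $\alpha = O(1/\epsilon^2)$, giving $O(\alpha \log n)$ total, which is a lower-order term. Summing, a single instance runs in $O(n)$ time, and the $\beta$ parallel instances give $O(n\log(1/\delta))$.

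For the matching lower bound, I would exhibit a worst-case input. Take the all-zero stream $a_1 = a_2 = \cdots = a_n = 0$. The sample never receives any element, so $\ell$ stays at $0$ forever, and $R_0 = \{0,1,\ldots,p-1\}$ contains every hash value. Consequently $\ds(i+1,0,p,a,b) = i+1$ for every $i$, and the sleep-until loop is entered exactly $n$ times, each iteration doing a constant amount of work (check $a_i$, fail the \textbf{if}, call $\ds$ which exits at Line~\ref{line:exit1} in $O(1)$). This gives $\Omega(n)$ work per instance, and $\Omega(n\log(1/\delta))$ once we account for the $\beta$ parallel instances, all of which face the same input. Combined with the upper bound, this establishes the $\Theta(n\log(1/\delta))$ claim.

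The main subtlety to handle carefully will be the amortization in category (i): individual $\ds$ calls can cost as much as $\Theta(\log n)$ when $d_j$ is near $n$, so a naive per-call bound would yield $O(n\log n\log(1/\delta))$ rather than the desired $O(n\log(1/\delta))$. The crucial observation I would highlight is that a single expensive call necessarily skips a correspondingly large chunk of the stream, and the inequality $\log(d_j+1) \leq d_j + 1$ together with $\sum_j (d_j+1) \leq n$ lets the total cost telescope down to $O(n)$, which is what reconciles the $\Theta(\log n \log(1/\delta))$ per-item worst case in Table~\ref{tab:compare} with the $\Theta(n\log(1/\delta))$ aggregate bound.
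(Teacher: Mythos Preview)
Your proposal is correct and mirrors the paper's proof almost exactly: the same three-part decomposition, the same telescoping via $\log d_j \le d_j$ together with $\sum d_j = O(n)$, the same all-zero stream for the lower bound, and the same multiplication by $\beta$. The only small refinement in the paper is that the final $\ds$ call may return a position past $n$, so your inequality $\sum_{j=1}^k(d_j+1)\le n$ can fail at $j=k$; the paper handles this by peeling off that last call and bounding it separately as $O(\log n)$, a trivial patch to your write-up.
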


\begin{proof}
  The total time cost of Algo.~\ref{algo:new} for processing a stream
  includes the time cost for (1) the {\tt DirectSample} function calls
  (Line~\ref{line:next}), (2) inserting the selected 1-bits into the
  sample at various levels (Line~\ref{line:insert}), and (3)
  increasing the sample level when the sample is full
  (Line~\ref{line:while}--\ref{line:discard}).


  Suppose $d_1, d_2, \ldots, d_t$, for some $t\geq 1$, is the sequence
  of integers returned by the {\tt DirectSample} function calls during
  the stream processing.  Note that:
  \begin{equation}
    \label{eq:d}
    t+\sum_{i=1}^t d_i\geq n \textrm{\ \ \ \  and \ \ \ \ } t+\sum_{i=1}^{t-1} d_i < n
  \end{equation}
 The
  total processing time is bounded by
\begin{equation}
\label{eq:time}
\sum_{i=1}^{t} O(\log d_t) + O(t) + O\left(\alpha \maxlevel
\right)
\end{equation}
where the three terms  capture the time cost of type (1), (2),
and (3), respectively. Following Equation~\ref{eq:time}, 
\begin{eqnarray*}
&&\sum_{i=1}^{t} O(\log d_i) + O(t)+ O\left(\alpha \maxlevel\right)\\
&=&\sum_{i=1}^{t-1} O(\log d_i)+ O(\log d_t) 
+ O(t)
+ O\left(\samplesize \maxlevel\right)\\
&=& \sum_{i=1}^{t-1} O(d_i) + O(\log n)+ O(n)+ O\left(1/(\epsilon^2)\log n\right)\\
&=& O(n) + O(n) + O\left(1/(\epsilon^2)\log n\right)
\textit{\ \ (Inequality~\ref{eq:d})}\\
&=& O(n)
\end{eqnarray*}
The above bound is also tight. For example, when the stream has all
0-bits, the sample will always be empty and thus will always be at
level 0, meaning every stream element will be checked, giving
a total time cost of at least $\Theta(n)$.  Therefore, the worst case
total time cost of one instance of Algo.~\ref{algo:new} is
$\Theta(n)$.  The claim in the theorem  follows due to the fact
that we need to run $\beta$ instances of Algo.~\ref{algo:new}
(Theorem~\ref{thm:correctnew}) in order to maintain an
$(\epsilon,\delta)$-estimate of $U$.
\end{proof}

\paragraph{Comment:}  (1) $\Theta(n\log(1/\delta))$ is GT's total time cost
in any case, but it rarely occurs with our method. (2) It is necessary to 
spend $\Theta(n\log(1/\delta))$  in the worst case, because in
the case all stream elements are 0-bits, we need to  check every
stream element if we want to estimate $U$ with a relative error guarantee.

\begin{theorem}[Worst-case  per-element time cost]
\label{thm:per}
Algo.~\ref{algo:new} spends $\Theta(\log(1/\delta) \log n)$
time in the worst case for processing one stream element in order to  maintain
an $(\epsilon,\delta)$-estimate of $U$. 
\end{theorem}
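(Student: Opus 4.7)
The plan is to mirror the decomposition used in the proof of Theorem~\ref{thm:worst-total}, but to apply it to a single stream element rather than to the entire stream. For one instance of Algo.~\ref{algo:new}, the work triggered by the arrival of a single element splits into three pieces: (i) the $\ds$ call at Line~\ref{line:next} that computes the next location to inspect, (ii) the constant-time insertion at Line~\ref{line:insert} when the element is a selected $1$-bit, and (iii) the \textbf{while} loop at Lines~\ref{line:while}--\ref{line:discard} that raises the sample level and prunes any stored indices whose hash value has left $R_\ell$.

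For piece (i), Corollary~\ref{cor:directsample} bounds the cost by $O(\min(\log a,\log d)) = O(\log n)$, since $a<p\le 20n$. Piece (ii) is trivial. For piece (iii), the key observation is that the sample level is capped by $\maxlevel=O(\log n)$, so a single insertion can trigger at most $O(\log n)$ level increases before the algorithm either stabilizes or aborts at Line~\ref{line:failnew}. Each iteration of the loop scans the sample of size at most $\alpha+1$ and discards every $x$ with $h(x)\notin R_\ell$, costing $O(\alpha)$ time. Treating $\epsilon$ as a constant (as in the per-item row of Table~\ref{tab:compare}) so that $\alpha = O(1/\epsilon^2)=O(1)$, piece (iii) contributes $O(\log n)$ in the worst case. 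Summing the three pieces yields $O(\log n)$ per element for a single instance.

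Tightness follows by exhibiting a single element whose $\ds$ computation already requires $\Theta(\log n)$ operations; alternatively, any element whose insertion triggers a full cascade of level increases from $\ell$ up to $\Theta(\log n)$ already forces $\Omega(\log n)$ discard work once the sample has been populated. Finally, Theorem~\ref{thm:correctnew} requires $\beta=\Theta(\log(1/\delta))$ independent parallel instances in order to turn the $\epsilon$-estimate into an $(\epsilon,\delta)$-estimate, so the per-element bound multiplies to $\Theta(\log n \log(1/\delta))$, and matches GT's per-item cost line of Table~\ref{tab:compare}.

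The main subtlety lies in piece (iii): one must justify both the $O(\log n)$ upper bound on the number of \textbf{while}-loop iterations that can be caused by one element and a matching lower-bound example, while remaining consistent with the convention that $\epsilon$ is treated as a constant. The $\ds$ bound in piece (i) is inherited directly from Corollary~\ref{cor:directsample}, so the novelty of this proof is really just the careful per-element accounting of the cascading level-increase work.
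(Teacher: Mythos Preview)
Your decomposition into pieces (i)--(iii) matches the paper's, and your bounds for (i) and (ii) are identical to the paper's. The real divergence is in how you handle piece~(iii), the cascading \textbf{while} loop.

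You bound the loop by (number of level increases)~$\times$~(cost of one scan)~$= O(\log n)\cdot O(\alpha)$, and then invoke the convention that $\epsilon$ is a constant so that $\alpha = O(1)$. This is sound, but it proves only $O((1/\epsilon^{2})\log n)$ per element per instance, which matches the theorem's $\Theta(\log n)$ statement \emph{only under} the $\epsilon=\Theta(1)$ assumption you flag. The paper, by contrast, obtains $O(\log n)$ per element \emph{independently of $\epsilon$}: it reorganizes the sample as a collection of at most $\maxlevel+1$ linked lists, one per ``surviving level'' (the highest level at which an element is still retained), and performs deletions \emph{lazily}---on overflow it raises the level until some list becomes stale, removes a single element from that list, and inserts the new element into its own list. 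Each of these steps touches at most $O(\maxlevel)=O(\log n)$ lists, and no $\alpha$-sized scan ever occurs. The paper's tightness example is correspondingly different: it comes from the $\Theta(\log n)$ cost of locating the correct linked list among $\Theta(\log n)$ candidates, rather than from a $\ds$ call or a full level cascade.

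So your argument is a valid, simpler proof of a weaker statement; if you want the $\epsilon$-free bound that the theorem literally asserts, you need the lazy-deletion data structure the paper introduces, not the direct scan of Line~\ref{line:discard}.
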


\begin{proof}
  The  time cost of Algo.~\ref{algo:new} for processing one
  stream element  in the worst case includes those for (1) the 
  function call for {\tt DirectSample} (Line~\ref{line:next}),
  (2) inserting the element into the sample if it is a 1-bit
  (Line~\ref{line:insert}), and (3) increasing the
  sample level when the sample is full
  (Line~\ref{line:while}--\ref{line:discard}).

  The time cost of type (1) is bounded by $O(\log n)$. The time cost
  of type (2) is bounded by $O(1)$. Next, let's look at the time cost of
  type (3).  We organize all the elements in the sample into multiple
  linked lists.  Each linked list is the collection of elements who
  share the same highest sample level through which the
  element will still remain in the sample. We call this level as the
  list's \emph{surviving} level.  Note that there can be at most
  $\maxlevel+1$ linked lists, because the algorithm maintains no more
  than $\maxlevel+1$ sample levels.  Every time we need to insert a new element
  into the sample, there will be two cases.

  Case 1: the sample is not full. We will first find which linked list
  the new element belongs to. This takes $O(\maxlevel)=O(\log n)$
  time. We will then insert the new element into that linked list which
  takes $O(1)$ time. So altogether, the time cost of type (3) is
  $O(\log n)$.

  Case 2: if the sample is full, we will discard the elements from the
  sample in a \emph{lazy} manner. We will first keep increasing the
  sample level until we get at least one linked list $L$ whose
  surviving level is smaller than the current sample level. This takes
  no more than $O(\maxlevel)=O(\log n)$ time, because the algorithm
  maintains no more than $\maxlevel+1$ levels.  We will then delete an
  element from $L$.  Then we insert the new element into the
  appropriate linked list that the new element belongs to. This again
  takes no more than $O(\maxlevel) = O(\log n)$ time, since we need to
  find that right linked list among at most $\maxlevel+1$ linked
  lists. So altogether, the time cost of type (3) in this case is
  $O(\log n)$.  This bound is also tight.  For example, suppose at
  some certain point of time there are $\Theta(\log n)$ linked list
  being maintained in the sample.  In that case, finding the
  appropriate linked list to insert the new element will take
  $\Theta(\log n)$ time in the worst case, meaning the time cost of
  type (3) will be indeed $\Theta(\log n)$.

  Add the time cost of type (1), (2), and (3), we get the worst-case
  time cost for processing one stream element by one instance of
  Algo.~\ref{algo:new} is $\Theta(\log n)$.  The claim in the theorem
  then follows due to the fact that we need to run $\beta$ instances
  of Algo.~\ref{algo:new} (Theorem~\ref{thm:correctnew}) in order to
  maintain an $(\epsilon,\delta)$-estimate of $U$.
\end{proof}

  \paragraph{Comment:}  In practice, when the stream size is large, only a very small
  portion of the stream will be checked in our method.  This indicates the
  probability of any particular element being processed by more than a constant number of 
  instances of Algo.~\ref{algo:new} is very low. So, our method's worst-case
  time cost per element will be $O(\log n)$ with high probability.

\smallskip

Intuitively, the new ingredient {\tt DirectSample} can significantly
speed up the stream processing in practice. Next, we show that the
average total time cost of Algo.~\ref{algo:new} for processing a
$\gamma$-random bit stream is indeed sublinear of the stream size.
Recall that the notion of $\gamma$-random bit stream
(Definition~\ref{def:randstream}) can fit into the distribution of many
real-world streams.

\begin{theorem}[Average-case total time cost for $\gamma$-random  bit stream]
\label{thm:time-avg}
The average time cost of Algo.~\ref{algo:new} for processing 
a $\gamma$-random bit stream of size $n$
is   $O\bigl(\frac{1}{\gamma\epsilon^2}\log^2 n \log\frac {1}{\delta}\bigr)$ 
in the  maintenance of   an $(\epsilon,\delta)$-estimate for $U$.
%
\end{theorem}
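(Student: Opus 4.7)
The plan is to bound the expected time of a single instance of Algorithm \ref{algo:new} by $O\bigl(\frac{1}{\gamma\epsilon^2}\log^2 n\bigr)$ and then multiply by $\beta=O(\log(1/\delta))$, the number of independent instances demanded by Theorem \ref{thm:correctnew}. Following the accounting used in the proof of Theorem \ref{thm:worst-total}, per-instance work decomposes into: (i) the DirectSample calls, each costing at most $O(\log n)$ by Corollary \ref{cor:directsample}; (ii) the $O(1)$ per-insertion bookkeeping; and (iii) the level-rising overhead, bounded deterministically by $O(\alpha\maxlevel)=O\bigl(\frac{1}{\epsilon^2}\log n\bigr)$. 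Part (iii) is already within budget and (ii) is dominated by (i), so the task reduces to bounding the expected number of DirectSample calls.

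I would partition the calls by sample level. Let $N_\ell$ be the number of DirectSample calls issued while the algorithm is at level $\ell$. Because $p\in[10n,20n]$, only levels $0,1,\ldots,\maxlevel=O(\log n)$ are ever visited, so if I can establish $\e[N_\ell]=O(\alpha/\gamma)=O\bigl(\frac{1}{\gamma\epsilon^2}\bigr)$ at each level, summing gives $\e\bigl[\sum_\ell N_\ell\bigr]=O\bigl(\frac{1}{\gamma\epsilon^2}\log n\bigr)$, and multiplying by the $O(\log n)$ per-call cost yields $O\bigl(\frac{1}{\gamma\epsilon^2}\log^2 n\bigr)$ per instance, as desired.

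The heart of the argument is this per-level bound. At level $\ell$, each DirectSample call advances the processor to the next position $x$ with $h(x)\in R_\ell$; the algorithm leaves level $\ell$ only when the sample overflows or the stream ends. Because the sample already contains at most $\alpha$ elements upon entering level $\ell$, at most $\alpha$ newly inserted 1-bits can trigger overflow. Crucially, since $h$ depends only on the position and is chosen independently of the stream, each visited position independently carries a 1-bit with probability at least $\gamma$ under the $\gamma$-random assumption. A geometric-waiting-time (Wald-type) argument then bounds the expected number of visits needed to collect $\alpha$ 1-bits by $\alpha/\gamma$; if the stream ends first, there are even fewer visits, so the bound persists.

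The main obstacle I expect is rigorously decoupling the three sources of randomness: the hash function, the stream bits, and the random sequence of levels. In particular the entry positions $s_\ell$ are themselves stopping times and are correlated with both $h$ and the past bits, so the waiting-time step needs to be justified by conditioning on $s_\ell$ and invoking the independence of the future bits and the unused hash evaluations from the history. Once the per-level bound is in place, summing over the $O(\log n)$ levels, multiplying by the $O(\log n)$ per-call DirectSample cost, and finally by $\beta=O(\log(1/\delta))$ delivers the claimed $O\bigl(\frac{1}{\gamma\epsilon^2}\log^2 n\log\frac{1}{\delta}\bigr)$ bound.
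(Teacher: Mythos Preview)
Your proposal is correct and follows essentially the same approach as the paper: both bound the expected number of \texttt{DirectSample} calls by $O(\alpha\cdot\maxlevel/\gamma)=O\bigl(\frac{1}{\gamma\epsilon^2}\log n\bigr)$ via a geometric waiting-time argument (each visited position is a $1$-bit with probability at least $\gamma$, and at most $\alpha$ insertions can occur per level over at most $\maxlevel+1$ levels), then multiply by the $O(\log n)$ per-call cost and by $\beta=O(\log(1/\delta))$. The paper's version is terser---it states the global bound $\alpha\cdot\maxlevel$ on total inserted $1$-bits without your explicit per-level partition and simply invokes Theorem~\ref{thm:per} for the per-element cost---and it glosses over the independence concern you flag, which is indeed justified by the hash function being chosen independently of the stream bits.
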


\begin{proof}
  There are no more than $\maxlevel$ sample levels and the sample size
  is $\alpha=\samplesize$, so the number of 1-bits that are selected
  by one instance of Algo.~\ref{algo:new} over the course of the
  stream processing is no more than $\alpha \maxlevel$, which
 is also true even if the sample level exceeds $\maxlevel$ as the
  algorithm will exit in that case (Line~\ref{line:failnew}).

  On the other hand, in a $\gamma$-random bit stream, the probability
  that a  stream location selected by {\tt DirectSample} has a 1-bit is at least
  $\gamma$.  So, for each 1-bit that is selected over the
  course of stream processing, 
  we have in average 
$1/\gamma$\footnote{The expectation of a
    geometric random variable with parameter $\gamma$.} before {\tt
    DirectSample} instances of 
{\tt DirectSample} function calls
 before {\tt
    DirectSample} actually returns a stream location which indeed
  contains a 1-bit. Therefore, in average, there are no more than
  $(1/\gamma)\alpha \maxlevel$ stream elements that have been 
  checked by Algo.~\ref{algo:new}. By Theorem~\ref{thm:per}, we know the worst-case
  per-element time cost  by one
  instance of Algo.~\ref{algo:new} is $\Theta(\log n)$, we get the
  average-case total time cost by one instance of Algo.~\ref{algo:new}
  for processing a $\gamma$-random bit stream of size $n$ is 
   $O(\log n)\cdot (1/\gamma)\alpha \maxlevel =
  O((1/(\gamma\epsilon^2))\log^2 n)$.

  The claim in the theorem then follows from the fact
  that we need to run $\beta$ instances of Algo.~\ref{algo:new}
  (Theorem~\ref{thm:correctnew}) in order to guarantee an
  $(\epsilon,\delta)$-estimate guarantee of $U$.
\end{proof}

\paragraph{Comment:}  (1) Our method is significantly faster than GT's in
practice, providing the first sublinear-time algorithm in the average
case for the distributed basic counting problem over most real-world
streams. (2) Our method provides the users with the new tradeoff that
a lower accuracy demand (a larger value for $\epsilon$) promises a
faster processing speed, whereas GT's is
$\Theta(n\log(1/\delta))$ in any case.

\subsection{Space and communication complexity}

\begin{theorem}[Space and communication cost]
\label{thm:space}
  To maintain an $(\epsilon,\delta)$-estimate of $U$: (1) the workspace
  at each processor is $O((1/\epsilon^2 + \log n) \log(1/\delta)
  \log n)$ bits. (2) the memory usage by the referee is
  $O((1/\epsilon^2)\log(1/\delta)\log n)$ bits.  (3) The communication
  cost per query between each processor and the referee is
  $O((1/\epsilon^2)\log(1/\delta)\log n)$ bits.
\end{theorem}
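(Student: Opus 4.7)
The plan is to bound the three quantities by aggregating the per-instance costs of Algo.~\ref{algo:new} and then multiplying by the $\beta = O(\log(1/\delta))$ parallel instances needed by Theorem~\ref{thm:correctnew}. All three claims reduce to counting how many $O(\log n)$-bit words each party stores or transmits in a single instance, so the main content is identifying the right contributions and verifying which party pays for what.

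For claim~(1), I would first list the data a processor maintains for one instance: (i) the sample $S$, which holds at most $\alpha = \lceil 60/\epsilon^2 \rceil$ stream indices, each $O(\log n)$ bits, contributing $O((1/\epsilon^2)\log n)$ bits; (ii) the hash parameters $a, b, p$ and the current sample level $\ell$, contributing $O(\log n)$ bits; and (iii) the recursion stack consumed by a call to $\ds$. By Corollary~\ref{cor:directsample}, an invocation of $\ds$ uses $O(\log n \cdot \min(\log a, \log d)) = O(\log^2 n)$ bits of stack in the worst case, and this is the only extra cost beyond GT's space footprint. Summing over one instance gives $O((1/\epsilon^2)\log n + \log^2 n) = O((1/\epsilon^2 + \log n)\log n)$ bits, and multiplying by $\beta$ yields the stated $O((1/\epsilon^2 + \log n)\log(1/\delta)\log n)$ bound.

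For claim~(2), the crucial observation is that the referee never invokes $\ds$: it only receives samples and sample levels, performs the discarding step against $R_{\ell^*}$, and returns a cardinality. Hence the $O(\log^2 n)$ stack overhead is absent at the referee, and its storage per instance is dominated by the $\alpha$ stored indices of $O(\log n)$ bits each, giving $O((1/\epsilon^2)\log n)$ bits per instance and $O((1/\epsilon^2)\log(1/\delta)\log n)$ bits total. For claim~(3), at query time each processor transmits its sample (size $\alpha$ with $O(\log n)$-bit entries) together with its $O(\log\log n)$-bit sample level, for $O((1/\epsilon^2)\log n)$ bits per instance; summing over $\beta$ instances gives the stated per-query communication bound.

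The only genuinely delicate step is isolating the $\log^2 n$ overhead in claim~(1) and arguing why it does not propagate to claims~(2) and~(3). Once one notes that the recursion stack is live only during the transient execution of $\ds$ at a processor, and that the referee and the communication channels only ever see the sample contents and the sample level, the proof is a mechanical accounting. I would emphasize this asymmetry explicitly, since it is precisely what justifies the paper's claim that the extra space usage is a lower-order term and transient, occurring only $O(\log n)$ times on average during stream processing.
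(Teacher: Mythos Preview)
Your proposal is correct and follows essentially the same approach as the paper: it itemizes the per-instance storage (sample of $\alpha$ indices, hash parameters, sample level, and the $O(\log^2 n)$-bit recursion stack from Corollary~\ref{cor:directsample}), sums, and multiplies by $\beta$, then observes that the referee and the communication channels see only the sample contents and sample level, so the $\log^2 n$ overhead does not appear in claims~(2) and~(3). Your explicit emphasis on why the stack overhead is confined to the processors is, if anything, slightly cleaner than the paper's presentation.
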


\begin{proof} 
\emph{The workspace at processors.} 
The workspace by each processor is
$O((1/\epsilon^2 + \log n) \log(1/\delta) \log n)$ bits, including the
following memory usages. 

\begin{itemize}
\item
The memory space for storing the sample system. 
 The sample maintained by each processor contains no more than
  $\alpha=\samplesize$ integers from the range $[1,n]$, using no more
  than $O(\alpha \log n) = O((1/\epsilon^2) \log n)$ bits.   The memory
  usage for recording the hashing function is $O(\log n)$ bits,
  because the three parameters $a$, $b$, $p$ that define the hash
  function $h(x) = (ax+b) \mod p$ are all bounded $O(n)$ bits
  (Section~\ref{subsec:hash}).  The number of bits used for recording
  the current sample level is bounded by $O(\log \log p) = O(\log \log
  n)$, because the sample level is no more than $\maxlevel$. 
  There are $\beta = \lceil 24 \ln(1/\delta)\rceil$ instances running
  in parallel on each processor, so the total memory cost for the
  sample system at each processor is $O((1/\epsilon^2)\log n + \log n
  + \log n \log n)\log(1/\delta)) = O((1/\epsilon^2)\log(1/\delta)\log n)$ bits.

\item The stack memory cost for the {\tt DirectSample} function call
(Line~\ref{line:next},
  Algo.~\ref{algo:new}).  It has been proved in
  Corollary~\ref{cor:directsample} that the space usage by one {\tt
    DirectSample} function call is no more than $O(\log n\cdot
  \min(\log a, \log d)) = O(\log^2 n)$ bits.  There are $\beta =
  \lceil 24 \ln(1/\delta)\rceil$ instances running in parallel on each
  processor, so the total memory cost for the {\tt DirectSample}
  function calls is no more than $O(\log(1/\delta) \log^2 n)$ bits.
\end{itemize}

  \emph{The workspace at the referee.} The memory needs by the referee
  is only for storing the samples it has received from the two
  processors, where the space usage of the samples from each processor
  is $O((1/\epsilon^2)\log(1/\delta)\log n)$ bits.

\emph{The communication cost per query and per link.} Upon the arrival
of a query at the referee, the only information that the referee needs
to collect from the processors is the samples maintained by the
processors. So, the communication cost per query and per link is
$O((1/\epsilon^2)\log(1/\delta)\log n)$ bits, the space usage of the
samples sent from one processor.
\end{proof}

\paragraph{Comment:}  Compared with the space usage of GT's, each
processor's extra space cost in our method is $O(\log^2 n
\log(1/\delta))$ bits due to the ${\tt DirectSample}$ function calls.
This extra space usage is a lower order term compared with GT's space usage when 
the value of $\epsilon$ is reasonably small (for ex., $\epsilon\leq 0.1$).
Further, this extra space cost occurs only $O(\log n)$ times for each
algorithm instance in average during the stream processing,
 because there are only $O(\log n)$
{\tt DirectSample} function calls within one algorithm instance in
average over the course of stream processing.  In practice  the
stream size $n$ is often very large, which is usually the case in stream processing
(otherwise there is no need to design space-efficient streaming
algorithms), the stream locations that are returned by {\tt
  DirectSample} within one algorithm instance will be very sparse,
meaning the probability that those $\beta$ parallel instances will
call the {\tt DirectSample} function at the same time
is very low. So, in practice, the extra space used for stack maintenance
in the $\beta$ instances will be only $O(\log^2 n)$
bits and occurs only $O(\log n)$ times with high probability. 
We will later show through experiments with both real-world and
synthetic data that this  extra space usage is too negligible to be detected by the OS.

\subsection{Extension: multiple streams and multiple processors}
\label{sec:ext}
It is trivial to extend our method for maintaining an
$(\epsilon,\delta)$-estimate for the distributed basic counting over
$k>2$ streams that are  processed by $k$ processors, for some
constant $k$. The procedure at each of these $k$ processors will be
exactly the same as the procedure in the 2-stream scenario. The
difference is when a query arrives, the referee needs to collect the
samples and their sample levels from all the processors, and then run
the following procedure to generate the estimate for $U$
(Equation~\ref{eqn:uk}).  The correctness proof and time and space complexity analysis
for the 2-stream case can be directly applied to this $k$-stream
scenario.

{
\bigskip 
\noindent
\underline{\bf Referee:} \ \ \ \ \ {\tt // Upon receiving a query for U.}

\smallskip
\noindent
{\bf for} $i=1,2,\ldots, k$\\
\hspace*{5mm}Receive $(S_i$, $\ell_i)$ from Processor $i$;\\
$\ell^* \leftarrow \max\{\ell_i \mid 1\leq i \leq k\}$;\\
{\bf for} $i=1,2,\ldots, k$\\
\hspace*{5mm}{\bf if}($\ell_i < \ell^*$) {\bf then}
Discard every $x\in S_i$ such that $h(x)\notin R_{\ell^*}$;\\
{\bf return} $|S_1 \cup \ldots \cup S_k|/P_{\ell^*}$;
}

\section{Experiments}
\label{sec:exp}

\paragraph{Main message.} 
Note that our method does not change the accuracy of GT's coordinated
adaptive sampling technique (a.k.a.\ distinct sampling), whose
accuracy in estimating the basic counting has been well studied and
validated by prior work~\cite{Gi01vldb,CTX-SICOMP09}\footnote{In
  particularly, the hash function used in~\cite{CTX-SICOMP09} is the
  same as the one we use in this paper.}. Thus, in this section, we only
want to demonstrate the time and space efficiency of our method
compared with GT's through experiments with both real-world and
synthetic data. The main messages from the experimental study are: (1)
Our method is exponentially faster than GT's method, whose
processing time is linear of the stream size. (2) Our method's
processing speed increases when the accuracy demand decreases (the
value of $\epsilon$ increases), while GT's method's processing speed
does not change regardless of the value of $\epsilon$. 
(3) Our 
method does not introduce any detectable space overhead, regardless of
the data sets and the value of $\epsilon$, compared with the space
cost by GT's.
All the above observations are perfectly consistent with the
theoretical results
summarized in
Table~\ref{tab:compare}.

\paragraph{System setup and data source.} All of our experiments were
conducted on a Dell Precision T5500s machine that has a 2.13GHz
Quad-core E5506 CPU with 4M cache, but no parallelism was used. The
machine runs 64-bit Ubuntu 10.04 Desktop and has
8GB DDR3 ECC 1066MHz SDRAM. We faithfully implemented the coordinated
adaptive sampling both with and without using the {\tt DirectSample}
technique (Algo.~\ref{algo:gt} and~\ref{algo:new}) using the {\tt C++}
programming language\footnote{The {\tt C++} source code can be
  downloaded at:
  \url{http://penguin.ewu.edu/~bojianxu/publications}.}. All
executables were built by {\tt GCC 4.4.3}.  We used the following
real-world and synthetic bit sequences in our experiments:

\begin{itemize}
\item Audio
  Bible\footnote{\url{http://spiritlessons.com/Documents/Bible/NIV_MP3_Bible/NIV_MP3_Bible.zip}}.
  We concatenated all the MP3 files of the audio Bible. It has a total
  of $919,658,056$ bits, of which $460,805,446$ are 1-bits.

\item Video of President G.\ W.\ Bush's
  speech\footnote{\url{https://ia600306.us.archive.org/23/items/Political_videos-GeorgeWBush20030904_5_224/Political_videos-GeorgeWBush20030904_5_224.ogv}}.
  It has a total of $1,434,146,160$ bits, of which $710,447,850$ are
  1-bits.

\item NASA Earth
  image\footnote{\url{http://eoimages.gsfc.nasa.gov/images/imagerecords/73000/73909/world.topo.bathy.200412.3x21600x10800.png}}.
  It has a total of $1,575,903,872$ bits, of which $789,808,848$ are
  1-bits.

\item Day 37 of Worldcup 98's network
  traffic\footnote{\url{http://ita.ee.lbl.gov/html/contrib/WorldCup.html}}. It
  has a total of $893,788,160$ bits, of which $257,380,419$ are
  1-bits.

\item Two synthetic bit sequences, where the probabilities of having a
  1-bit are $0.3$ and $0.4$, respectively. Each sequence has a total
  of $1,000,000,000$ bits, of which $300,023,303$ and $400,002,206$
  are 1-bits, respectively.
\end{itemize}

\subsection{Time efficiency}

\begin{table*}
\centering
{\footnotesize
\begin{tabular}{|l|l|r|r|r|r|r|r|}
\hline
{\bf Data} & {\bf Methods}  & ${\bf \epsilon=0.01}$ 
& ${\bf \epsilon=0.02}$ 
 & ${\bf \epsilon=0.05}$ & ${\bf \epsilon=0.1}$ &
  ${\bf \epsilon=0.2}$ & ${\bf \epsilon=0.5}$ \\
  \hline
  \hline
  & \tallstrut GT's &$18.7823$&$18.3439$&$18.2506$&$18.2531$&$18.2386$&$18.2312$\\
  \hhline{~-------}
  Audio Bible&  \tallstrut Ours &$\03.6273$&$\01.0784$&$\00.2841$&$\00.0528$&$\00.0285$&$\00.0046$\\
  \hhline{~-------}
  $460,805,446$ 1-bits &  \tallstrut {\bf Speedup} &${\bf >5}$x&${\bf >17}$x&${\bf >64}$x&${\bf >345}$x&${\bf >639}$x&${\bf >3952}$x\\
  \hline
  & \tallstrut GT's &$28.8833$&$28.6107$&$28.4632$&$28.2885$&$28.4486$&$28.5082$\\
  \hhline{~-------}
  Video of President&  \tallstrut Ours &$\03.2259$&$\01.0117$&$\00.3180$&$\00.1055$&$\00.0278$&$\00.0076$\\
  \hhline{~-------}
  $710,447,850$ 1-bits&  \tallstrut {\bf Speedup} &${\bf >8}$x&${\bf >28}$x&${\bf >89}$x&${\bf >268}$x&${\bf >1020}$x&${\bf >3727}$x\\
  \hline
 &  \tallstrut GT's &$31.9561$&$31.6179$&$31.5310$&$31.4197$&$31.5040$&$31.5009$\\
  \hhline{~-------}
  Earth Image &  \tallstrut Ours &$\03.1705$&$\01.3294$&$\00.2932$&$\00.0843$&$\00.0245$&$\00.0066$\\
  \hhline{~-------}
  $789,808,848$ 1-bits&  \tallstrut {\bf Speedup} &${\bf >10}$x&${\bf >23}$x&${\bf >107}$x&${\bf >372}$x&${\bf >1285}$x&${\bf >4725}$x\\
  \hline
  &  \tallstrut GT's &$12.9562$&$12.5322$&$12.5236$&$12.5224$&$12.5106$&$12.4987$\\
  \hhline{~-------}
  Worldcup 98&  \tallstrut Ours &$5.4275$&$\01.6266$&$\00.2791$&$\00.0678$&$\00.0372$&$\00.0070$\\
  \hhline{~-------}
  $257,380,419$ 1-bits&  \tallstrut {\bf Speedup} &${\bf >2}$x&${\bf >7}$x&${\bf >44}$x&${\bf >184}$x&${\bf >336}$x&${\bf >1762}$x\\
 \hline
&  \tallstrut GT's &$15.6868$&$15.2895$&$15.2082$&$15.1802$&$15.1939$&$15.1601$\\
  \hhline{~-------}
 \tallstrut Synthetic-0.3 &  \tallstrut Ours &$\02.9945$&$\01.4045$&$\00.4531$&$\00.0998$&$\00.0531$&$\00.0078$\\
  \hhline{~-------}
  $300,023,303$ 1-bits &  \tallstrut {\bf Speedup} &${\bf >5}$x&${\bf >10}$x&${\bf >33}$x&${\bf >152}$x&${\bf >286}$x&${\bf >1923}$x\\
  \hline
  &  \tallstrut GT's &$18.5577$&$17.5786$&$18.0129$&$17.9844$&$17.9933$&$17.9982$\\
  \hhline{~-------}
  \tallstrut Synthetic-0.4&  \tallstrut Ours &$\02.3465$&$\00.8511$&$\00.2100$&$\00.0782$&$\00.0306$&$\00.0063$\\
  \hhline{~-------}
$400,002,206$ 1-bits  &  \tallstrut {\bf Speedup} &${\bf >7}$x&${\bf >20}$x&${\bf >85}$x&${\bf >229}$x&${\bf >586}$x&${\bf >2851}$x\\
  \hline
\end{tabular}
} 
\caption{The time cost (in seconds) of both GT's and our methods in the
  processing of multiple bit streams with different accuracy demands
  $\epsilon$. The processing speed of GT's does not change with
  $\epsilon$, while ours becomes faster when $\epsilon$
  increases.  In all cases, our method  is significantly faster than
  GT's method.}
\label{tab:eps-time}
\end{table*}

\begin{figure*}
\centering
\hspace*{7mm}
  \subfigure[Audio Bible]{\label{fig:size-time-eps0.01-1} \includegraphics[scale=1.0]{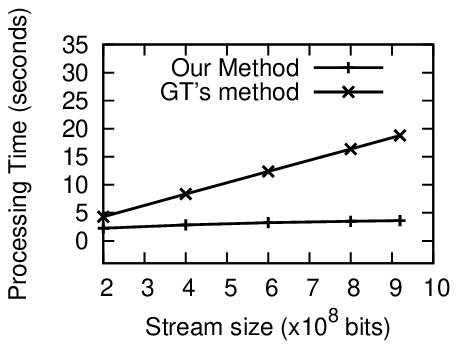}}%
  \subfigure[Video of President]{\label{fig:size-time-eps0.01-2} \includegraphics[scale=1.0]{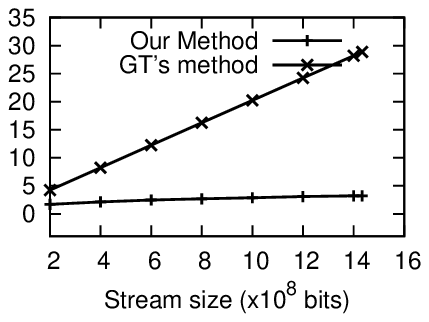}}%
  \subfigure[Earth Image]{\label{fig:size-time-eps0.01-3} \includegraphics[scale=1.0]{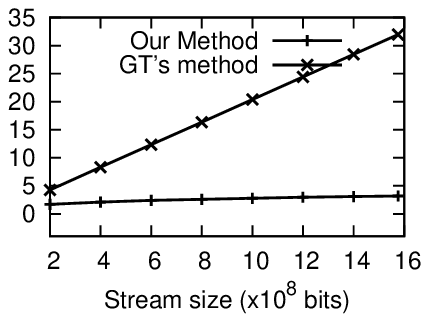}}%
  \newline
  \hspace*{5mm}
  \subfigure[Worldcup 98]{\label{fig:size-time-eps0.01-4} \includegraphics[scale=1.0]{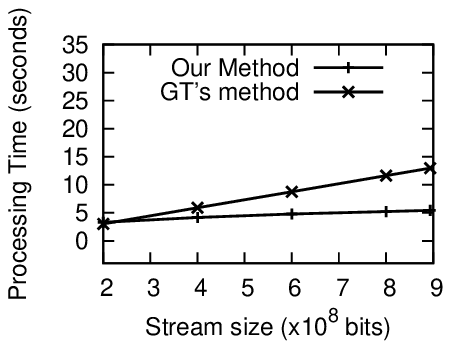}}%
  \subfigure[Synthetic-0.3]{\label{fig:size-time-eps0.01-5} \includegraphics[scale=1.0]{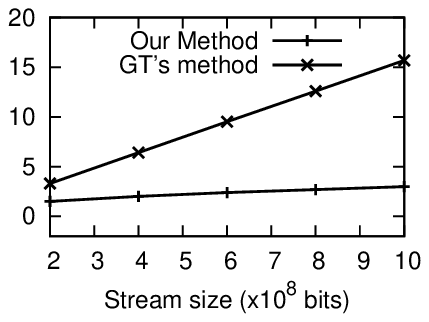}}%
  \subfigure[Synthetic-0.4]{\label{fig:size-time-eps0.01-6} \includegraphics[scale=1.0]{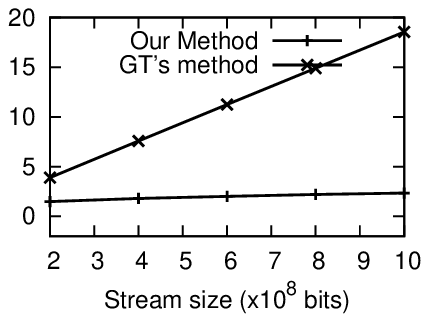}}%
  \caption{Stream size vs.\ time, $\epsilon = 0.01$. The processing
    time of GT's method is linear of the stream size, while ours is
    sublinear. Our method overall is much faster
    than GT's, especially when the stream size (the number of 1-bits
    in the stream, indeed) becomes larger.}
\label{fig:size-time-eps0.01}
\end{figure*}

Table~\ref{tab:eps-time} shows the total time cost of both methods in
processing different data sets with different accuracy demands.

\emph{Overall boosting.} Our method overall is significantly faster than GT's by a factor of several
  to several thousand times. This speedup becomes more significant when the value of
  $\epsilon$ increases and/or the stream size (more precisely, the
  number of 1-bits in the stream) increases. When deployed in the real
  world, where the stream size is nearly unbounded, our method can
  save a a vast majority of the processing time and energy that is
  needed by GT's.

\emph{A new tradeoff.} The time cost of GT's method is independent from the value of
  $\epsilon$, because their technique processes every stream element.
  However, our method becomes much faster when the
  value of $\epsilon$ increases. This
  is because a larger $\epsilon$ yields a smaller sample size, which makes the
  sample level be increased more often. Recall that a
  higher sample level selects stream elements with smaller
  probability, so it helps our {\tt DirectSample} technique be able to skip
  more stream elements.  This new feature in our method is
  important and useful, because it provides the user with 
  a new trade-off that a lower accuracy demand will not only save memory
  space but also will speed up the data processing.

\emph{Even faster for longer streams.} The more 1-bits are present in the stream, the more significant improvement  our
  method makes against GT's method. This is because more 1-bits
  leads to a higher sample level during the stream processing, which
  yields a lower sampling probability. That helps  {\tt
    DirectSample} skip over more stream elements,
  leading to an overall faster  processing speed.

\emph{Sublinear-linear time cost.} Figure~\ref{fig:size-time-eps0.01}
shows that the time cost of GT's method is linear of the stream size,
simply because their method processes every stream element, whereas
our method's processing time is sublinear of the stream size.  
The figure again shows our method overall is much faster than GT's, especially when the
stream size (the number of 1-bits in the stream, indeed) becomes
larger for the reason that we have given above.  
The plot in Figure~\ref{fig:size-time-eps0.01} shows the case where
$\epsilon=0.01$.  Figures regarding other $\epsilon$ values are given
in the appendix, from
which the same observations can be made.

\subsection{Space efficiency}

\begin{figure*}
\centering

\begin{minipage}[c]{1.4in}
\centering
\hspace*{-16mm}
\subfigure{\label{fig:eps-space-plot} \includegraphics[scale=1.1]{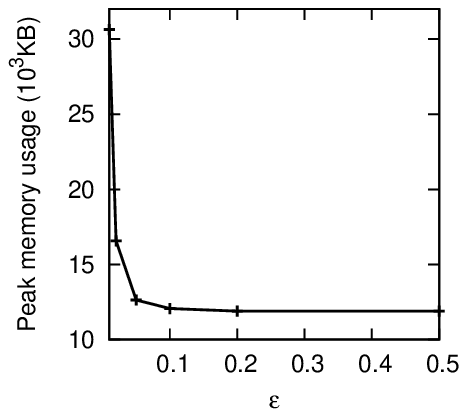}}
\end{minipage}
\hspace*{4mm}
\begin{minipage}[c]{4.8in}
\subfigure{\label{fig:eps-space-table} 
\centering
{\footnotesize
\begin{tabular}{|l|r|r|r|r|r|r|}
\hline
{\bf Methods}  & ${\bf \epsilon=0.01}$ 
& ${\bf \epsilon=0.02}$ 
 & ${\bf \epsilon=0.05}$ & ${\bf \epsilon=0.1}$ &
  ${\bf \epsilon=0.2}$ & ${\bf \epsilon=0.5}$ \\
  \hline
  \hline
  \tallstrut GT's & $30,640$ & $16,576$ & $12,640$ & $12,072$ & $11,892$ & $11,892$\\
  \hline
    \tallstrut Ours & $30,640$ & $16,576$ & $12,640$ & $12,072$ & $11,892$ & $11,892$\\
  \hline
    \tallstrut {\bf Overhead} & ${\bf 0\%}$ & ${\bf 0\%}$ & ${\bf 0\%}$ & ${\bf 0\%}$ & ${\bf 0\%}$ & ${\bf 0\%}$ \\
  \hline
\end{tabular}
}
}
\caption{The space usage (in KB) of both
    methods in the processing of multiple bit streams with  different
   $\epsilon$. The space usage of both methods
    increases when $\epsilon$ decreases and is independent from the
    stream size.  There is no detectable extra
    space cost by our method compared with GT's.}
\end{minipage}

\label{fig:eps-space}
\end{figure*}

We measure the space usage of our programs by {\tt VmPeak} minus the
memory cost for storing the data sets.  {\tt VmPeak} is
an entry in the {\tt /proc/<pid>/status} file, provided by the Linux
system. It captures the peak usage in KB of the total amount of virtual
memory used by the process, including the memory cost for the code,
data, and shared libraries plus the pages that have been swapped out.
{\tt VmPeak} represents the peak of the actual total memory cost of the process.
Figure~\ref{fig:eps-space} shows the space usage of both methods.

\emph{Space usage only depends on $\epsilon$.} The space usages of both methods are independent
  of stream size, but heavily depend upon the value of
  $\epsilon$. When the value of $\epsilon$ decreases, the space usage
  increases. This is consistent with the theoretical results --- the
  space usage of one instance of GT's and our method are
  $O((1/\epsilon^2)\log n)$ bits and $O((1/\epsilon^2 + \log n)\log
  n)$ bits, respectively. Note that we use the 64-bit {\tt unsigned
    long int} to represent the stream size, so the impact of the $\log
  n$ term in the big-oh bounds becomes fixed.

\emph{No detectable extra space usage.} For all data sets and all  $\epsilon$ values,
  there is no   detectable extra space usage by our method.
 That is, the extra space cost of
  $O(\log^2 n)$ bits by our method from the theoretical analysis is
  too negligible to be detected  by the OS in practice.

\section{Conclusion}
\label{sec:con}
In this paper, we designed {\tt DirectSample}, a new technique for
fast sampling, and used it in the coordinated adaptive sampling for
distributed basic counting.  Both the theoretical analysis and the
experimental results show that our method is exponentially faster than
the state-of-the-art GT's method. Further, the extra space usage by
our method from the theoretical analysis is not only negligible in
theory but also undetectable in practice.  Our new method can
potentially save a vast majority of processing time and energy needed
by GT's method in the processing of real-world streams, whose size is
nearly unbounded.  We also believe the new {\tt DirectSample}
technique can be of other independent interest.


\section{Acknowledgement}
The author acknowledges Atalay Mert \.{I}leri from
Bilkent University for the helpful discussion on the proof of
Lemma~\ref{lem:mapping}.

\bibliographystyle{plain}

\bibliography{streams}

\newpage
\appendix

\section*{Appendix}

Additional figures plotting the processing time of both methods
regarding the stream size over multiple data sets and
different values for $\epsilon$.  The time cost of GT's
method is linear of the stream size, simply because their method
processes every stream element, whereas our method's processing time
is sublinear of the stream size.  Also, our method overall is much
faster than GT's, especially when the stream size (the number of
1-bits in the stream, indeed) becomes larger. 

\bigskip 

\begin{figure*}[h!]
  \centering
  \subfigure[Audio Bible]{\label{fig:size-time-eps0.02-1} \includegraphics[scale=1.0]{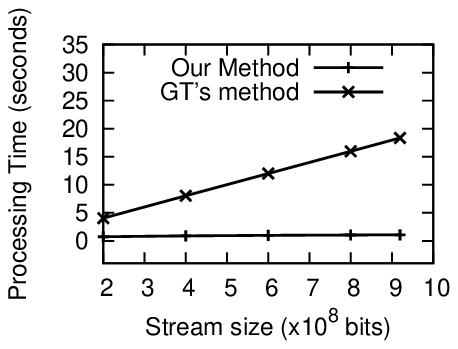}}%
  \subfigure[Video of President]{\label{fig:size-time-eps0.02-2} \includegraphics[scale=1.0]{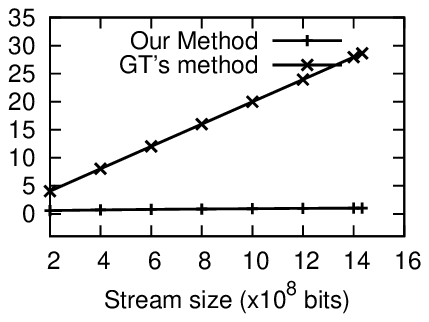}}%
  \subfigure[Earth Image]{\label{fig:size-time-eps0.02-3} \includegraphics[scale=1.0]{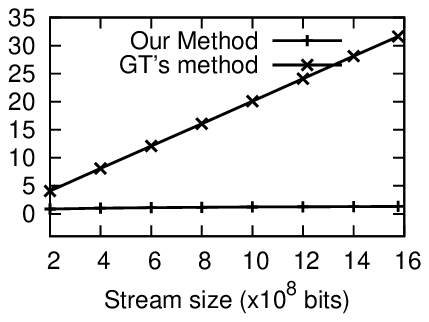}}%
  \newline
  \subfigure[Worldcup 98]{\label{fig:size-time-eps0.02-4} \includegraphics[scale=1.0]{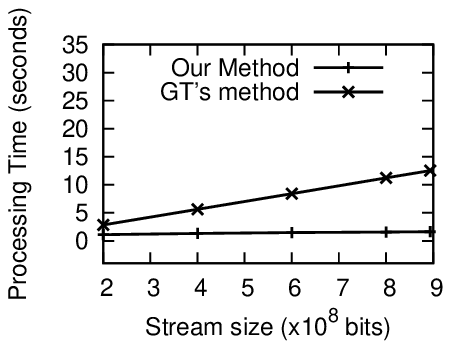}}%
  \subfigure[Synthetic-0.3]{\label{fig:size-time-eps0.02-5} \includegraphics[scale=1.0]{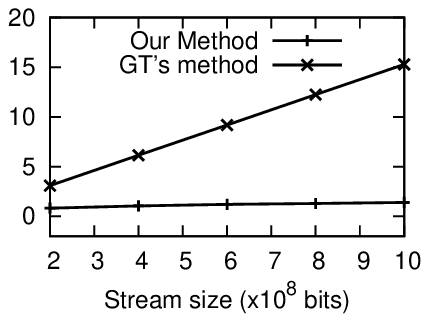}}%
  \subfigure[Synthetic-0.4]{\label{fig:size-time-eps0.02-6} \includegraphics[scale=1.0]{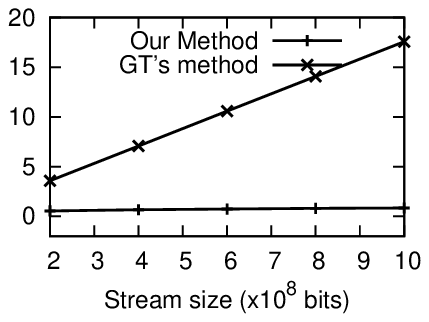}}%
\caption{Stream size vs.\ time, $\epsilon = 0.02$}
\label{fig:size-time-eps0.02}
\end{figure*}

\begin{figure*}[h!]
  \centering
  \subfigure[Audio Bible]{\label{fig:size-time-eps0.05-1} \includegraphics[scale=1.0]{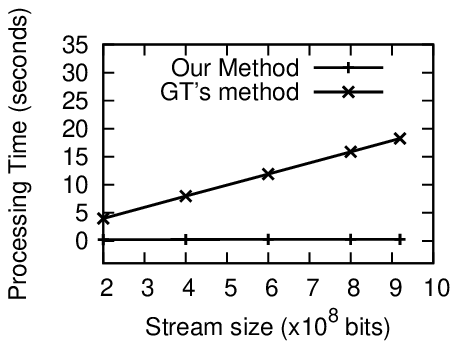}}%
  \subfigure[Video of President]{\label{fig:size-time-eps0.05-2} \includegraphics[scale=1.0]{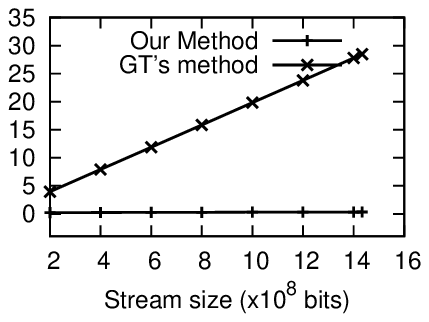}}%
  \subfigure[Earth Image]{\label{fig:size-time-eps0.05-3} \includegraphics[scale=1.0]{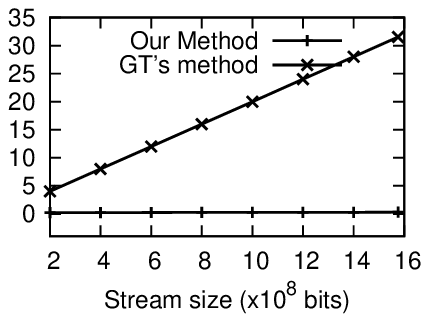}}%
  \newline
  \subfigure[Worldcup 98]{\label{fig:size-time-eps0.05-4} \includegraphics[scale=1.0]{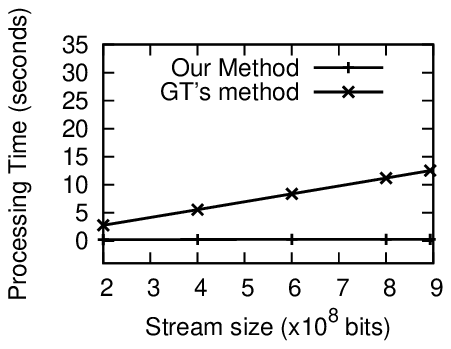}}%
  \subfigure[Synthetic-0.3]{\label{fig:size-time-eps0.05-5} \includegraphics[scale=1.0]{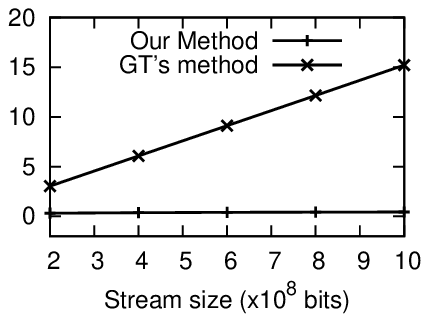}}%
  \subfigure[Synthetic-0.4]{\label{fig:size-time-eps0.05-6} \includegraphics[scale=1.0]{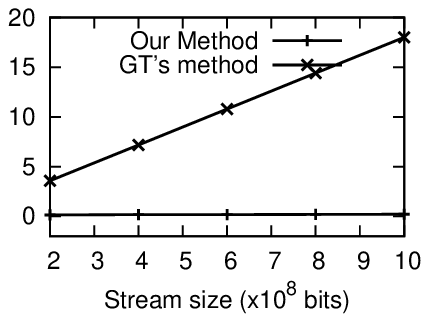}}%
\caption{Stream size vs.\ time, $\epsilon = 0.05$}
\label{fig:size-time-eps0.05}
\end{figure*}

\begin{figure*}[h!]
  \centering
  \subfigure[Audio Bible]{\label{fig:size-time-eps0.1-1} \includegraphics[scale=1.0]{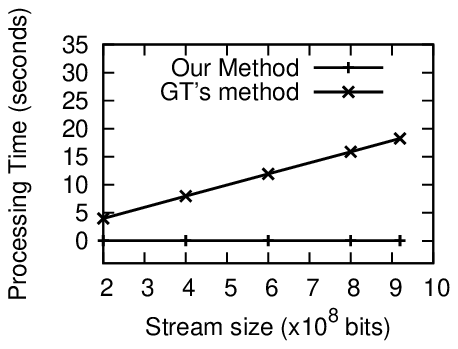}}%
  \subfigure[Video of President]{\label{fig:size-time-eps0.1-2} \includegraphics[scale=1.0]{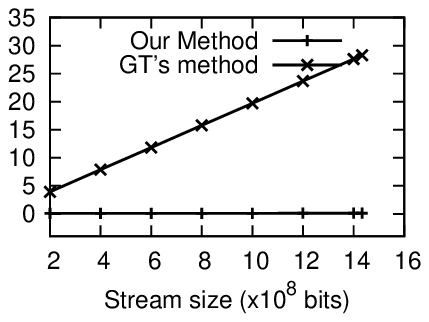}}%
  \subfigure[Earth Image]{\label{fig:size-time-eps0.1-3} \includegraphics[scale=1.0]{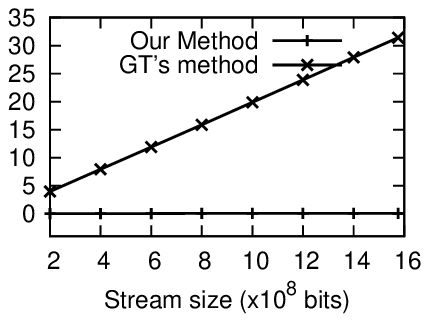}}%
  \newline
  \subfigure[Worldcup 98]{\label{fig:size-time-eps0.1-4} \includegraphics[scale=1.0]{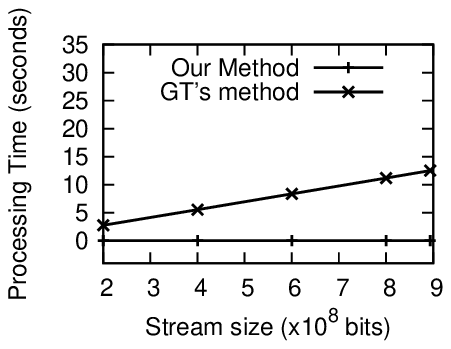}}%
  \subfigure[Synthetic-0.3]{\label{fig:size-time-eps0.1-5} \includegraphics[scale=1.0]{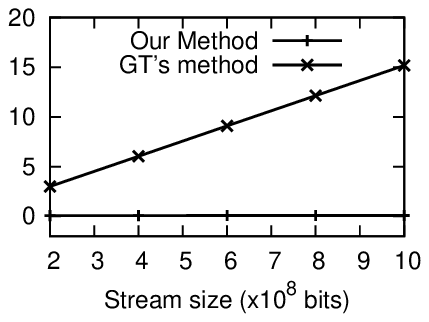}}%
  \subfigure[Synthetic-0.4]{\label{fig:size-time-eps0.1-6} \includegraphics[scale=1.0]{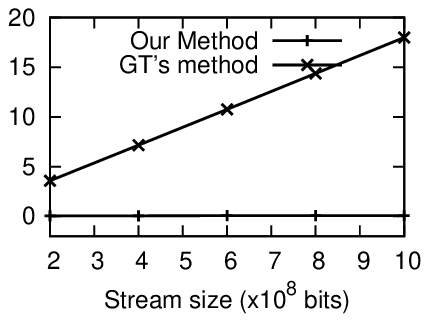}}%
\caption{Stream size vs.\ time, $\epsilon = 0.1$}
\label{fig:size-time-eps0.1}
\end{figure*}

\begin{figure*}[h!]
  \centering
  \subfigure[Audio Bible]{\label{fig:size-time-eps0.2-1} \includegraphics[scale=1.0]{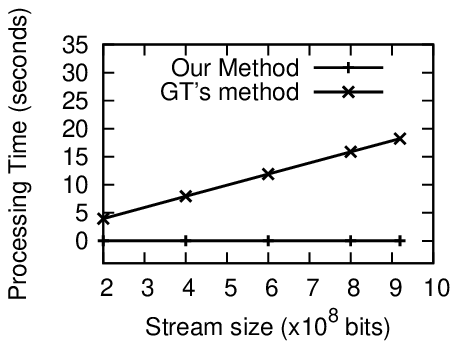}}%
  \subfigure[Video of President]{\label{fig:size-time-eps0.2-2} \includegraphics[scale=1.0]{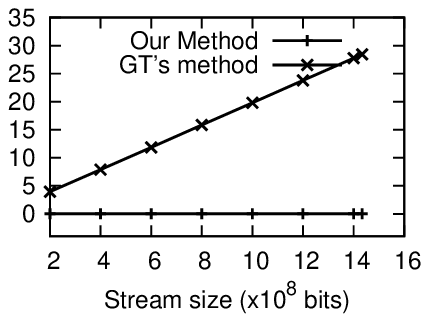}}%
  \subfigure[Earth Image]{\label{fig:size-time-eps0.2-3} \includegraphics[scale=1.0]{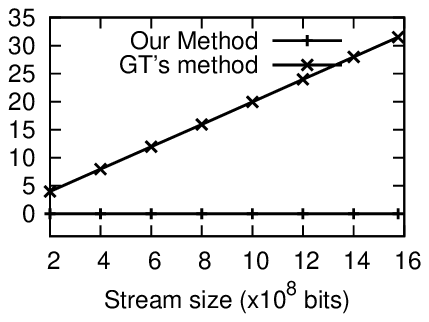}}%
  \newline
  \subfigure[Worldcup 98]{\label{fig:size-time-eps0.2-4} \includegraphics[scale=1.0]{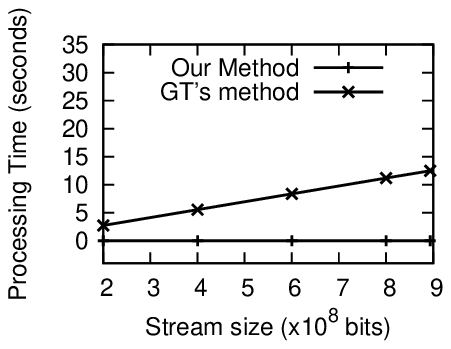}}%
  \subfigure[Synthetic-0.3]{\label{fig:size-time-eps0.2-5} \includegraphics[scale=1.0]{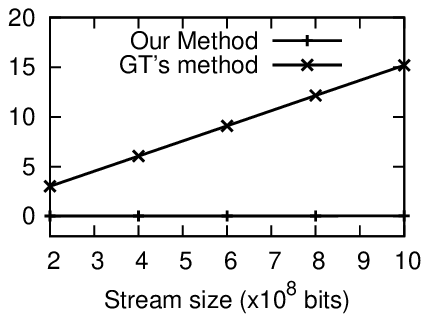}}%
  \subfigure[Synthetic-0.4]{\label{fig:size-time-eps0.2-6} \includegraphics[scale=1.0]{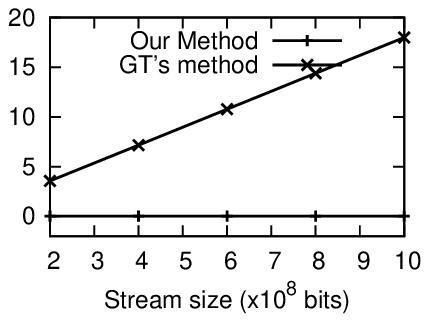}}%
\caption{Stream size vs.\ time, $\epsilon = 0.2$}
\label{fig:size-time-eps0.2}
\end{figure*}

\begin{figure*}[h!]
  \centering
  \subfigure[Audio Bible]{\label{fig:size-time-eps0.5-1} \includegraphics[scale=1.0]{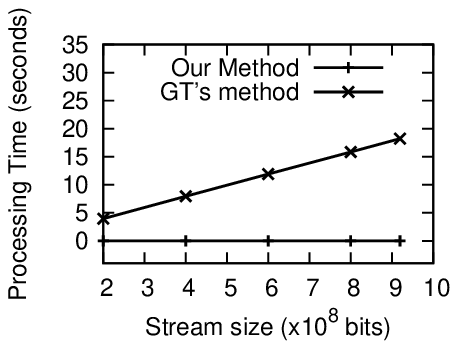}}%
  \subfigure[Video of President]{\label{fig:size-time-eps0.5-2} \includegraphics[scale=1.0]{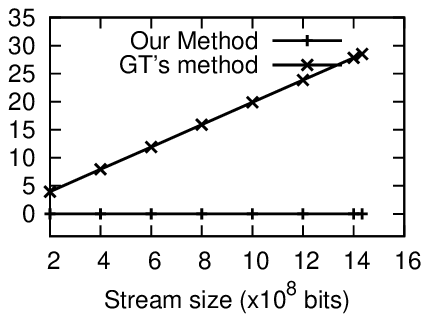}}%
  \subfigure[Earth Image]{\label{fig:size-time-eps0.5-3} \includegraphics[scale=1.0]{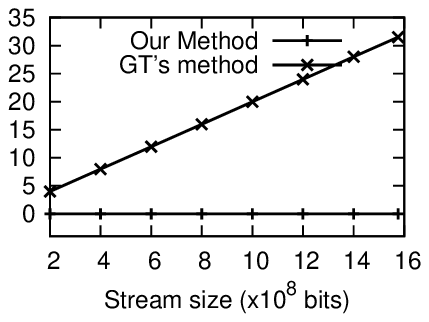}}%
  \newline
  \subfigure[Worldcup 98]{\label{fig:size-time-eps0.5-4} \includegraphics[scale=1.0]{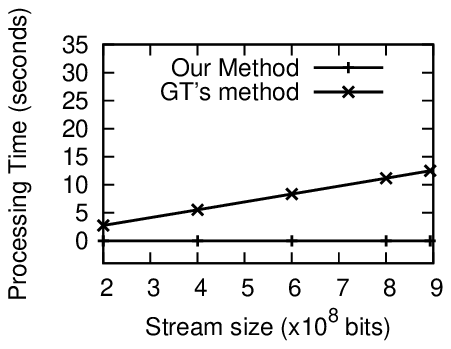}}%
  \subfigure[Synthetic-0.3]{\label{fig:size-time-eps0.5-5} \includegraphics[scale=1.0]{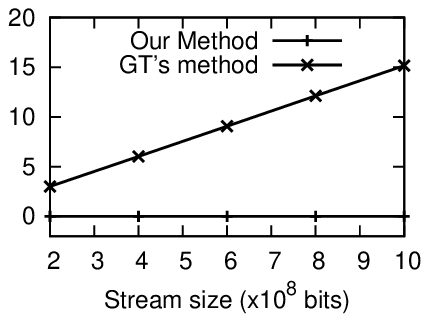}}%
  \subfigure[Synthetic-0.4]{\label{fig:size-time-eps0.5-6} \includegraphics[scale=1.0]{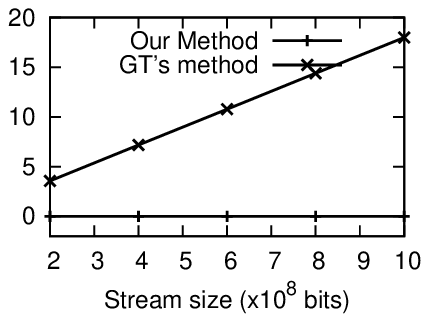}}%
\caption{Stream size vs.\ time, $\epsilon = 0.5$}
\label{fig:size-time-eps0.5}
\end{figure*}

\end{document}